\documentclass[letter,12pt]{article}

\usepackage{fullpage}

\usepackage{amsthm,amsmath,natbib}

\usepackage{parskip}
\usepackage{amsfonts}
\usepackage{enumerate}
\usepackage{graphicx}
\usepackage{rotating}
\usepackage{natbib}

\newtheorem{theo}{Theorem}[section]

\newtheorem{coro}{Corollary}[section]

\newcommand{\Prob}{\mathbb{P}}
\newcommand{\R}{\mathbb{R}}
\newcommand{\E}{\mathbb{E}}

\newcommand{\abs}[1]{\left| #1 \right|}

\newcommand{\calN}{\mathcal{N}}
\DeclareMathOperator{\Cov}{Cov}

\newcommand{\leftb}{\Big}
\newcommand{\leftsmall}{\big}
\newcommand{\rightb}{\Big}
\newcommand{\rightsmall}{\big}

\newcommand{\footnoteremember}[2]{
  \footnote{#2}
  \newcounter{#1}
  \setcounter{#1}{\value{footnote}}
}
\newcommand{\footnoterecall}[1]{
  \footnotemark[\value{#1}]
}

\begin{document}

\title{P-Values for High-Dimensional Regression}

\author{Nicolai Meinshausen\footnoteremember{same}{These authors
    contributed equally to this work}\footnote{Department of Statistics,
    University of Oxford, UK} \and 
  Lukas Meier\footnoterecall{same}\footnoteremember{zurich}{Seminar f\"ur
    Statistik, ETH Zurich, Switzerland} \and 
  Peter B\"uhlmann\footnoterecall{zurich}}

\maketitle
\begin{abstract}
  Assigning significance in high-dimensional regression is
  challenging. Most computationally efficient selection algorithms cannot
  guard against inclusion of noise variables. Asymptotically valid p-values
  are not available. An exception is a recent proposal by
  \citet{wasserman08highdim} which splits the data into two parts. The
  number of variables is then reduced to a manageable size using the first
  split, while classical variable selection techniques can be applied to
  the remaining variables, using the data from the second split. This
  yields asymptotic error control under minimal conditions. It involves,
  however, a one-time random split of the data. Results are sensitive to
  this arbitrary choice: it amounts to a `p-value lottery' and makes it
  difficult to reproduce results.  Here, we show that inference across
  multiple random splits can be aggregated, while keeping asymptotic
  control over the inclusion of noise variables. We show that the resulting p-values can be used for control of both family-wise error (FWER) and false discovery rate (FDR).
In addition, the proposed
  aggregation is shown to improve power while reducing the number of
  falsely selected variables substantially.
\end{abstract}

\textbf{Keywords:} High-dimensional variable selection, Data splitting,
Multiple comparisons, Family-wise error rate, False discovery rate. 

\section{Introduction}
The problem of high-dimensional variable selection has received tremendous
attention in the last decade. Sparse estimators like the Lasso
\citep{tibshirani96lasso} and extensions thereof \citep{zou06adalasso,
  meinshausen05relaxedlasso} have been shown to be very powerful because
they are suitable for high-dimensional data sets and because they lead to
sparse, interpretable results.

In the usual work-flow for high-dimensional variable selection problems,
the user sets potential tuning parameters to their prediction optimal
values and uses the resulting estimator as the final result. In the
classical low-dimensional setup, some error control based on p-values is a
widely used standard in all areas of sciences. So far, p-values were not
available in high-dimensional situations, except for the proposal of
\citet{wasserman08highdim}. An ad-hoc solution for assigning relevance is
to use the bootstrap to analyze the stability of the selected predictors
and to focus on those which are selected most often (or even
always). \cite{bach08bootstrap} and \citet{meinshausen08stability}  show for the Lasso that this leads to a
consistent model selection procedure under fewer restrictions than for the
non-bootstrap case.

More recently, some progress has been achieved to obtain error control
\citep{wasserman08highdim, meinshausen08stability}. Here, we build upon the
approach of \cite{wasserman08highdim} and show that an extension of their
`screen and clean' algorithm leads to a more powerful variable selection
procedure. Moreover, family-wise error rate (FWER) and false discovery rate (FDR) can be controlled, while
\cite{wasserman08highdim} focus on variable selection rather than assigning
significance via p-values. We also extend methodology to control of the false discovery rate \citep{benjamini95fdr} for high-dimensional data.

 While the main application of the procedure are high-dimensional data, where the number $p$ of variables can greatly exceed sample size $n$, we show that the method is also quite competitive with more standard error control for $n>p$ settings, indeed often giving a better detection power in the presence of highly correlated variables. 

This article is organized as follows. We discuss the single-split method of
\cite{wasserman08highdim} briefly in Section~\ref{section:datasplitting},
showing that results can strongly depend on the arbitrary choice of a
random sample splitting. We propose a multi-split method, removing this
dependence. In Section~\ref{section:error} we prove FWER and FDR-control of the
multi-split method, and we show in Section~\ref{section:numerical}
numerically for simulated and real data-sets that the method is more
powerful than the single-split version while reducing substantially the
number of false discoveries.
 Some possible extensions of the proposed
methodology are outlined in Section~\ref{section:extensions}.

\section{Sample Splitting and High-Dimensional Variable Selection}
\label{section:datasplitting}
We consider the usual high-dimensional linear regression setup with a
response vector $Y = (Y_1, \ldots, Y_n)$ and an $n \times p$ fixed design
matrix $X$ such that
\[
  Y = X\beta + \varepsilon,
\]
where $\varepsilon = (\varepsilon_1, \ldots \varepsilon_n)$ is a random
error vector with $\varepsilon_i$ iid.\ $\calN(0, \sigma^2)$ and $\beta \in
\R^p$ is the parameter vector. Extensions to other models are outlined in
Section \ref{section:extensions}. 

Denote by
\[
  S = \{j; \, \beta_j \neq 0\}
\]
the set of active predictors and similarly by $N = S^c = \{j; \, \beta_j =
0\}$ the set of noise variables. Our goal is to assign p-values for the
null-hypotheses $H_{0,j}: \beta_j = 0$ versus $H_{A,j}: \beta_j \neq 0$ and
to infer the set $S$ from a set of $n$ observations $(X_i,Y_i)$,
$i=1,\ldots,n$. We allow for potentially high-dimensional designs, i.e.\ $p
\gg n$. This makes statistical inference very challenging.  An approach
proposed by \citet{wasserman08highdim} is to split the data into two parts,
reducing the dimensionality of predictors on one part to a manageable size
of predictors (keeping the important variables with high probability), and
then to assign p-values and making a final selection on the second part of
the data, using classical least squares estimation.

\subsection{FWER control with the Single-Split Method}
The procedure of \cite{wasserman08highdim} attempts to control the family-wise error rate (FWER), which is defined as the probability of making at least one false rejection. The method relies on sample-splitting,
performing variable selection and dimensionality reduction on one part of
the data and classical significance testing on the remaining part. The data
are splitted randomly into two disjoint groups $D_{in} = (X_{in}, Y_{in})$
and $D_{out} = (X_{out}, Y_{out})$ of equal size.  Let $\tilde{S}$ be a
variable selection or screening procedure which estimates the set of active
predictors. Abusing notation slightly, we also denote by $\tilde{S}$ the
set of selected predictors.  Then variable selection and dimensionality
reduction is based on $D_{in}$, i.e.\ we apply $\tilde{S}$ only on
$D_{in}$. This includes the selection of potential tuning parameters
involved in $\tilde{S}$.  The idea is to break down the large number $p$ of
potential predictor variables to a smaller number $k\ll p$ with $k$ at most
a fraction of $n$ while keeping all relevant variables. The regression
coefficients and the corresponding p-values
$\tilde{P}_1,\ldots,\tilde{P}_p$ of the selected predictors are determined
based on $D_{out}$ by using ordinary least squares estimation on the set
$\tilde{S}$ and setting $\tilde{P}_j=1$ for all $j\notin \tilde{S}$. If the
selected model $\tilde{S}$ contains the true model $S$, i.e.\ $\tilde{S}
\supseteq S$, the p-values based on $D_{out}$ are unbiased. Finally, each
p-value $\tilde{P}_j$ is adjusted by a factor $|\tilde{S}|$ to correct for
the multiplicity of the testing problem.

The selected model is given by all variables in $\tilde{S}$ for which the
adjusted p-value is below a cutoff $\alpha\in (0,1)$,
\[ 
  \hat{S}_{\mathit{single}} = \leftb\{ j\in\tilde{S} : \tilde{P}_j |\tilde{S}|
 \le \alpha \rightb\}. 
\] 
Under suitable assumptions discussed later, this yields asymptotic control
against inclusion of variables in $N$ (false positives) in the sense that
\[ 
  \limsup_{n\to\infty} \Prob\leftb[|N\cap \hat{S}_{single}| \geq 1\rightb] \le
  \alpha,
\] 
i.e.\ control of the family-wise error rate. The method is easy to
implement and yields the asymptotic control under weak assumptions.  The
single-split method relies, however, on an arbitrary split into $D_{in}$
and $D_{out}$. Results can change drastically if this split is chosen
differently. This in itself is unsatisfactory since results are not
reproducible.

\subsection{FWER control with the New Multi-Split Method}
An obvious alternative to a single arbitrary split is to divide the sample
repeatedly. For each split we end up with a set of p-values. It is not
obvious, though, how to combine and aggregate the results.

 In the remainder of the section, we will give a
possible answer. For each hypothesis, a distribution of p-values is obtained for random sample splitting. We will propose that error control can be based on the quantiles of this distribution.  We will show empirically that, maybe
unsurprisingly, the resulting
procedure is more powerful than the single-split method. The multi-split method also makes results reproducible, at
least approximately if the number of random splits is chosen to be very
large.

The multi-split method uses the following procedure:

\noindent
For $b = 1, \ldots, B$:
    \begin{enumerate}
    \item Randomly split the original data into two disjoint groups
      $D^{(b)}_{in}$ and $D^{(b)}_{out}$ of equal size.
    \item Using only $D^{(b)}_{in}$, estimate the set of active predictors
      $\tilde{S}^{(b)}$.
    \item 
      \begin{enumerate}
      \item Using only $D^{(b)}_{out}$, fit the selected variables in
        $\tilde{S}^{(b)}$ with ordinary least squares and calculate the
        corresponding p-values $\tilde{P}^{(b)}_j$ for $j \in
        \tilde{S}^{(b)}$.
        \item Set the remaining p-values to 1, i.e.\
          \[
            \tilde{P}^{(b)}_j = 1, \, j \notin \tilde{S}^{(b)}.
          \]
      \end{enumerate}
    \item Define the adjusted (non-aggregated) p-values as
      \begin{equation}\label{eq:pvaladj}
        P^{(b)}_j = \min\leftb(\tilde{P}^{(b)}_j |\tilde{S}^{(b)}|, \,
          1\rightb), \; j = 1, \ldots, p
      \end{equation}
    \end{enumerate}
Finally, aggregate over the $B$ p-values $P_j^{(b)}$, as discussed below.

The procedure leads to a total of $B$ p-values for each predictor
$j=1,\ldots,p$. It will turn out that suitable summary statistics are
quantiles. For $\gamma \in (0, 1)$ define
\begin{equation}
  \label{eq:quant-pval}
  Q_j(\gamma) = \min\leftb\{1, q_{\gamma}\big(\{P^{(b)}_j / \gamma; \, b = 1, \ldots, B\}\big)\rightb\}, 
\end{equation}
where $q_{\gamma}(\cdot)$ is the (empirical) $\gamma$-quantile function. 

A p-value for each predictor $j=1,\ldots,p$ is then given by $Q_j(\gamma)$,
for any fixed $0 < \gamma < 1$. We will show in Section \ref{section:error}
that this is an asymptotically correct p-value, adjusted for
multiplicity. To give an example, for a choice of $\gamma=0.5$, the quantity $Q_j(0.5)$ is twice the median of all p-values $P^{(b)}_j$, $b=1,\ldots,B$.

 A proper selection of $\gamma$ may be
difficult. Error control is not guaranteed anymore if we search for the
best value of $\gamma$.
We propose to use instead an adaptive version which selects a suitable
value of the quantile based on the data.  Let $\gamma_{\min} \in (0, 1)$ be
a lower bound for $\gamma$, typically $0.05$, and define
\begin{equation}
  \label{eq:quant-pval-opt}
  P_j = \min\leftb\{ 1, \big(1 - \log \gamma_{\min} \big)
  \inf_{\gamma \in (\gamma_{\min}, 1)} Q_j(\gamma). \rightb\}
\end{equation}

The extra correction factor $1 - \log \gamma_{\min}$ ensures that the
family-wise error rate remains controlled at level $\alpha$ despite of the
adaptive search for the best quantile, see Section \ref{section:error}. For
the recommended choice of $\gamma_{\min}=0.05$, this factor is upper
bounded by 4; in fact, $1-\log(0.05)\approx 3.996$. 

We comment briefly on the relation between the proposed adjustment to false
discovery rate \citep{benjamini95fdr, benjamini01control} or family-wise
error \citep{holm79simple} controlling procedures. While we provide a
family-wise error control and as such use union bound corrections as in
\cite{holm79simple}, the definition of the adjusted p-values
(\ref{eq:quant-pval-opt}) and its graphical representation in Figure
\ref{fig:hist} are vaguely reminiscent of the false discovery rate
procedure, rejecting hypotheses if and only if the empirical distribution
of p-values crosses a certain linear bound. The empirical distribution in
(\ref{eq:quant-pval-opt}) is only taken for one predictor variable, though,
which is either in $S$ or $N$. This would correspond to a multiple testing
situation where we are testing a single hypothesis with multiple
statistics. 
\begin{figure}
\includegraphics[width=0.45\textwidth]{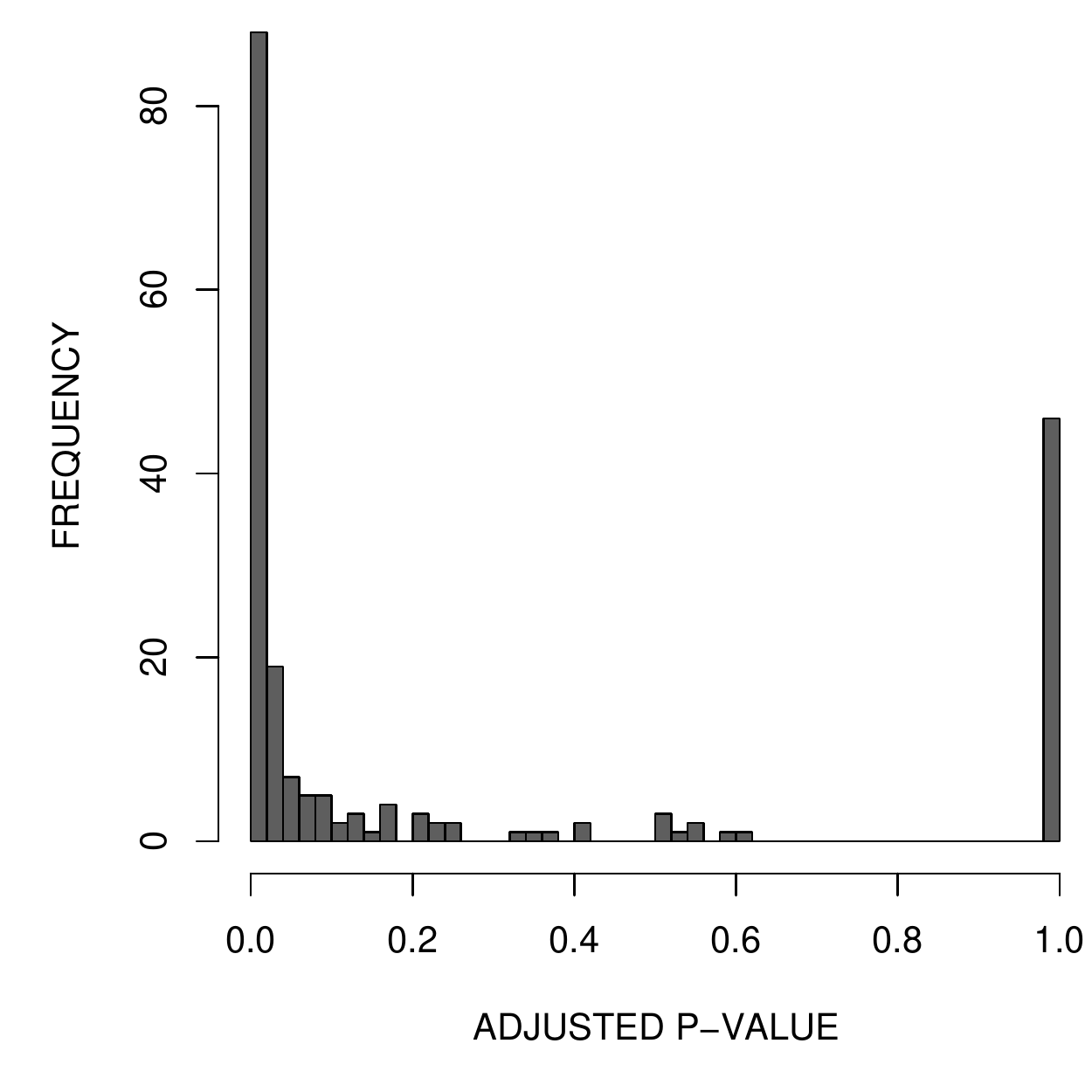}
\includegraphics[width=0.45\textwidth]{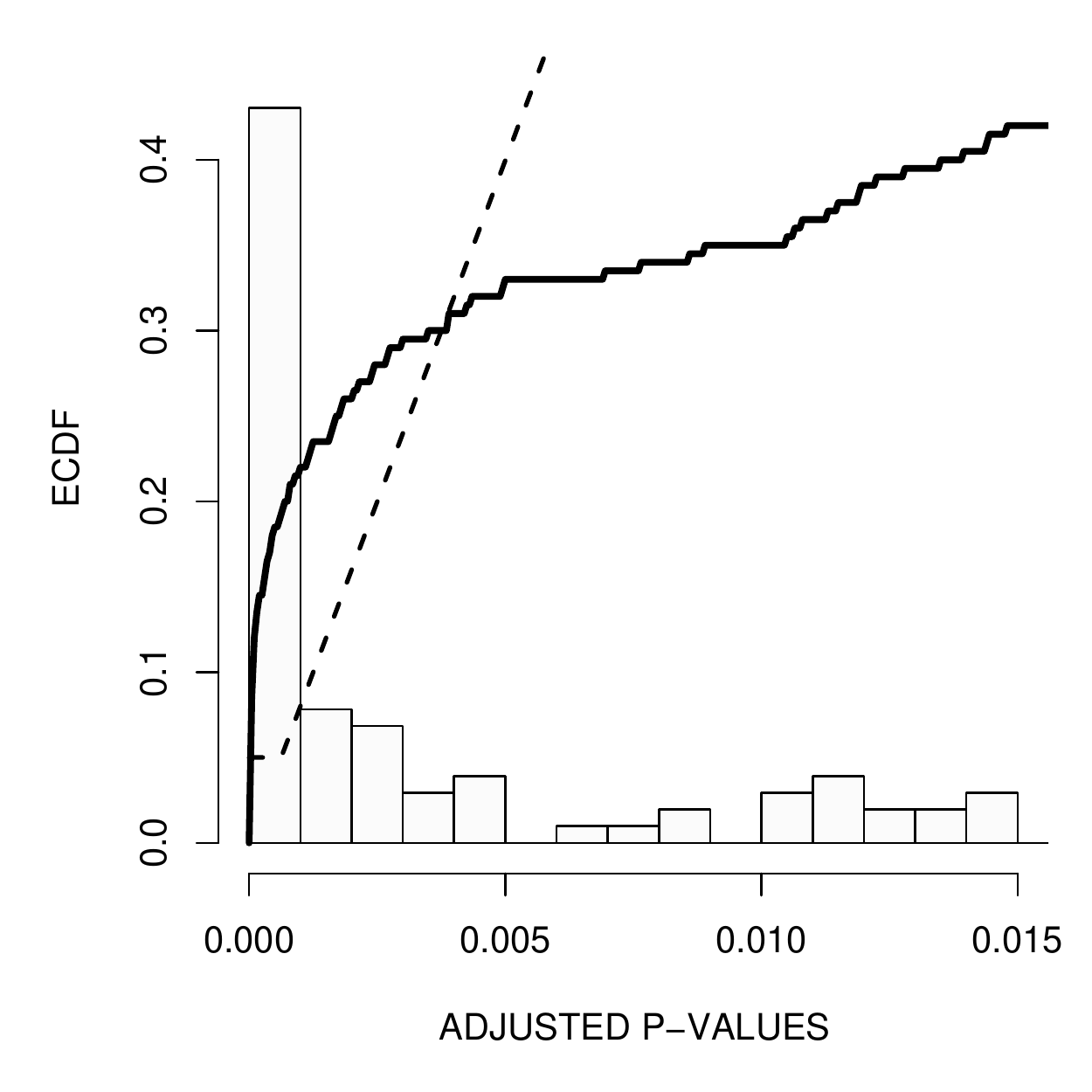}
\caption{\textit{ \label{fig:hist} Left: a histogram of adjusted p-values
    $P^{(b)}_j$ for the selected variable in the motif regression data
    example of Section \ref{sec:realdata-motifregression}. The single split
    method picks randomly one of these p-values (a `p-value lottery') and
    rejects if it is below $\alpha$. For the multi-split method, we reject
    if and only if the empirical distribution function of the adjusted
    p-values crosses the broken line (which is $f(p) = \max\{0.05,
    (3.996/\alpha) p \}$) for some $p\in (0,1)$. This bound is shown as a
    broken line for $\alpha=0.05$. For the given example, the bound is
    indeed exceeded and the variable is thus selected.}}
\end{figure}
Figure~\ref{fig:hist} shows an example. The left panel contains the
histogram of the adjusted p-values $P^{(b)}_j$ for $b=1,\ldots,B$ of the
selected variable in the real data example in Section
\ref{sec:realdata-motifregression}. The single split method is equivalent
to picking one of these p-values randomly and selecting the variable if
this randomly picked p-value is sufficiently small. To avoid this `p-value
lottery', the multi-split method computes the empirical distribution of all
p-values $P^{(b)}_j$ for $b=1,\ldots,B$ and rejects if the empirical
distribution crosses the broken line in the right panel of
Figure~\ref{fig:hist}. A short derivation of the latter is as
follows. Variable $j$ is selected if and only if $P_j\le\alpha$, which
happens if and only if there exists some $\gamma\in(0.05,1)$ such that
$Q_j(\gamma) \le \alpha/(1-\log 0.05) \approx \alpha/3.996$.  Equivalently,
using definition (\ref{eq:quant-pval}), the $\gamma$-quantile of the
adjusted p-values, $q_\gamma(P^{(b)}_j)$, has to be smaller than or equal
to $\alpha\gamma/3.996$. This in turn is equivalent to the event that the
empirical distribution of the adjusted p-values $P^{(b)}_j$ for
$b=1,\ldots,B$ is crossing above the bound $f(p) = \max\{0.05,
(3.996/\alpha) p\}$ for some $p\in (0,1)$. This bound is shown as a broken
line in the right panel of Figure~\ref{fig:hist}.

The resulting adjusted p-values $P_j$, $j=1,\ldots,p$ can then be used for both FWER and FDR control. 
For FWER control at level $\alpha\in (0,1)$, simply all p-values below $\alpha$ are rejected and the selected subset is 
\begin{equation}
  \label{eq:selected-model}
  \hat{S}_{\mathit{multi}} = \{ j  : P_j \le \alpha \}.
\end{equation}
We will show in Section \ref{section:FWERcontrol} that indeed, asymptotically, $\Prob(V>0)\le \alpha$, where $V=|\hat{S}_{\mathit{multi}}\cap N|$ is the number of falsely selected variables under the proposed selection (\ref{eq:selected-model}).
Besides better reproducibility and asymptotic family-wise error control,
the multi-split version is, maybe unsurprisingly, more powerful than the
single-split selection method.

\subsection{FDR control with the multi-split method}\label{section:FDR}
Control of the family-wise error rate is often considered as too conservative.
If many rejections are made, \citet{benjamini95fdr} proposed to control instead the expected proportion of false rejections, the false discovery rate (FDR).
Let $V= |\hat{S} \cap N    | $ be the number of false rejections for a selection method $\hat{S}$ and $R= |\hat{S}|$ the total number of rejections. The false discovery rate is defined as the expected proportion of false rejections
\begin{equation}\label{eq:FDRdef} \E(Q),   \qquad \mbox{with  }\;\;  Q=V/\max\{1,R\} .\end{equation}
For no rejections, $R=0$, the denominator ensures that the false discovery proportion $Q$ is 0, conforming with the definition in  \citet{benjamini95fdr}.

The original FDR controlling procedure in \citep{benjamini95fdr} first orders the observed p-values as $P_{(1)}\le P_{(2)} \le \ldots \le P_{(p)}$ and defines
\begin{equation}\label{kFDR} k = \max\{ i: P_{(i)} \le \frac{i}{p} q \}.\end{equation} Then all variables or hypotheses with the smallest $k$ values are rejected and no rejection is made if the set in (\ref{kFDR}) is empty. FDR is controlled this way at level $q$ under the condition that all p-values are independent. It has been shown in \citet{benjamini01control} that the procedure is conservative under a wider range of dependencies between p-values; see also \citet{blanchard2008tss} for related work. It would, however, require a big leap of faith to assume any such assumption for our setting of high-dimensional regression. For general dependencies, \citet{benjamini01control} showed that control is guaranteed at level  $q \sum_{i=1}^p i^{-1}  \approx q(1/2 + \log (p))$. 

The standard FDR procedure is working with the raw p-values, which are assumed to be uniformly distributed on $[0,1]$  for true null hypotheses. The division by $p$ in (\ref{kFDR}) is an effective correction for multiplicity.
The proposed multi-split method, however, is producing already adjusted p-values, as in (\ref{eq:quant-pval-opt}). 
Since we are working already with multiplicity-corrected p-values, the division by $p$ in (\ref{kFDR}) turns out to be superfluous. Instead, we can order the corrected p-values $P_j$, $j=1,\ldots,p$ in increasing order $P_{(1)}\le P_{(2)}\le \ldots \le P_{(p)}$ and select the $h$ variables with the smallest p-values, where 
\begin{equation}\label{hFDR} h= \max\{ i: P_{(i)} \le i q \}.\end{equation} 
The selected set of variables is denoted, with the value of $h$ given in (\ref{hFDR}), by
\begin{equation}
  \label{eq:selected-model-FDR}
  \hat{S}_{\mathit{multi};\mathit{FDR}} = \{ j : P_j \le P_{(h)} \},
\end{equation}
with no rejections, $\hat{S}_{\mathit{multi};\mathit{FDR}}=\emptyset$, if $P_{(i)} > iq$ for all $i=1,\ldots,p$.

The procedure (\ref{eq:selected-model-FDR}) will achieve FDR control at level $q \sum_{i=1}^p i^{-1} \approx q(1/2+\log p)$. To get FDR control at level $q$, we replace $q$ in (\ref{hFDR}) by $q/(\sum_{i=1}^p i^{-1})$, completely analogous to the standard FDR-procedure under arbitrary dependence of the p-values in \citet{benjamini01control}.
We will prove error control in the following section and show empirically the advantages of the proposed multi-split version over both the single-split and  standard FDR  controlling procedures in the later section with numerical results.

\section{Error Control and Consistency}
\label{section:error}
\subsection{Assumptions}
To achieve asymptotic error control, a few assumptions are made in
\cite{wasserman08highdim} regarding the crucial requirements for the
variable selection procedure $\tilde{S}$.

\begin{enumerate}[({A}1)]
\item \label{screening-property}
  \emph{Screening property}: $\lim_{n \to \infty} \Prob\leftb[\tilde{S}
  \supseteq S\rightb] \; =\;  1.$
\item \label{sparsity-property}
  \emph{Sparsity property:} $|\tilde{S}| < n/2$.
\end{enumerate}
The \emph{screening property} (A\ref{screening-property}) ensures that all
relevant variables are retained. Irrelevant noise variables are allowed to
be selected, too, as long as there are not too many as required by the
\emph{sparsity property} (A\ref{sparsity-property}). A violation of the
sparsity property would make it impossible to apply classical tests on the
retained variables.

The Lasso \citep{tibshirani96lasso} is an important example which satisfies
(A\ref{screening-property}) and (A\ref{sparsity-property}) under
appropriate conditions discussed in \cite{meinshausen06lasso},
\cite{zhao06consistency}, \cite{geer07glmlasso},
\cite{meinshausen06lassorecovery2} and \cite{bickel2008sal}. The adaptive
Lasso \citep{zou06adalasso, zhang07lasso} satisfies
(A\ref{screening-property}) and (A\ref{sparsity-property}) as well under
suitable conditions. Other examples include, assuming appropriate
conditions, $L_2$ Boosting \citep{friedman01boosting,
  buhlmann06highdimboost}, orthogonal matching pursuit \citep{tropp07omp}
or Sure Independence Screening \citep{fan08sis}.

We will typically use the Lasso (and extensions thereof) as screening
method. Other algorithms would be possible. \cite{wasserman08highdim}
studied various scenarios under which these two properties are satisfied
for the Lasso, depending on the choice of the regularization parameter. We
refrain from repeating these and similar arguments, just working on the
assumption that we have a selection procedure $\tilde{S}$ at hand which
satisfies both the \emph{screening property} and the \emph{sparsity
  property}.

\subsection{FWER  control}\label{section:FWERcontrol}
We proposed two versions for multiplicity-adjusted p-values. One is
$Q_j(\gamma)$ as defined in (\ref{eq:quant-pval}) which relies on a choice
of $\gamma\in(0,1)$. The second is the adaptive version $P_j$ defined in
(\ref{eq:quant-pval-opt}) which makes an adaptive choice of $\gamma$. We
show that both quantities are multiplicity-adjusted p-values providing
asymptotic FWER-error control.

\phantom{k} 
\begin{theo}
  \label{theo:fam-error-single}
  Assume (A\ref{screening-property}) and (A\ref{sparsity-property}). Let
  $\alpha, \gamma$ $\in (0, 1)$. If the null-hypothesis $H_{0,j}: \beta_j =
  0$ gets rejected whenever $Q_j(\gamma) \leq \alpha$, the family-wise error rate
  is asymptotically controlled at level $\alpha$, i.e.\
  \[
    \limsup_{n \to \infty} \; \Prob\Big[ \min_{j \in N} Q_j(\gamma) \leq
      \alpha \Big] \leq \alpha,
  \]
  where $\mathbb{P}$ is with respect to the data sample and the statement
  holds for any of the $B$ random sample splits.
\end{theo}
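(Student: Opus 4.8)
The plan is to reduce the whole statement to one clean per-split inequality and then aggregate across the $B$ splits by an argument that needs neither a union bound over the (possibly enormous) set $N$ nor any independence between the splits. Fix an index $b$ and write $G_b = \{\tilde{S}^{(b)} \supseteq S\}$ for the screening event. First I would argue that, conditionally on $\tilde{S}^{(b)}$ (a function of $D^{(b)}_{in}$, hence independent of $D^{(b)}_{out}$) and on $G_b$, the raw p-values $\tilde{P}^{(b)}_j$ for $j \in N \cap \tilde{S}^{(b)}$ are super-uniform: on $G_b$ the least-squares model fitted on $D^{(b)}_{out}$ is correctly specified with $|\tilde{S}^{(b)}| < n/2$ design columns, so the classical $t$-test for $H_{0,j}:\beta_j=0$ yields $\Prob[\tilde{P}^{(b)}_j \le u \mid \tilde{S}^{(b)}] \le u$ for each such $j$. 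Variables outside $\tilde{S}^{(b)}$ have $\tilde{P}^{(b)}_j = 1$ by construction and are harmless.

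The key per-split estimate is then a Bonferroni bound. On $G_b$ there are at most $|\tilde{S}^{(b)}|$ noise variables in $\tilde{S}^{(b)}$, and the adjustment factor in (\ref{eq:pvaladj}) is exactly $|\tilde{S}^{(b)}|$, so for any threshold $t \in (0,1)$ a union bound over $j \in N \cap \tilde{S}^{(b)}$ gives
\[
  \Prob\leftb[\min_{j \in N} P^{(b)}_j \le t \;\big|\; \tilde{S}^{(b)}\rightb]
    \le |N \cap \tilde{S}^{(b)}| \cdot \frac{t}{|\tilde{S}^{(b)}|} \le t
  \qquad \text{on } G_b .
\]
Writing $\xi_b = \mathbf{1}\{\min_{j \in N} P^{(b)}_j \le t\}$ and integrating (note that $G_b$ is $\tilde{S}^{(b)}$-measurable, and on $G_b^c$ I simply bound $\xi_b \le 1$), I would conclude $\E[\xi_b] \le t + \Prob[G_b^c]$. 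This is the entire content of the single-split FWER statement; it already handles the multiplicity over the huge index set $N$ through the data-dependent factor $|\tilde{S}^{(b)}|$, which is the point that replaces a naive (and hopeless) union bound over all of $N$.

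It remains to aggregate. Here I would exploit the elementary equivalence that, for the empirical $\gamma$-quantile, $q_\gamma(\{x_b\}) \le c$ holds if and only if $\#\{b : x_b \le c\} \ge \gamma B$. Applied with $x_b = P^{(b)}_j/\gamma$ and $c = \alpha$, the event $Q_j(\gamma) \le \alpha$ (for $\alpha < 1$) forces at least $\gamma B$ of the splits to satisfy $P^{(b)}_j \le \alpha\gamma$; for each such split the particular noise index $j$ witnesses $\min_{j' \in N} P^{(b)}_{j'} \le \alpha\gamma$, i.e.\ $\xi_b = 1$ with the threshold choice $t = \alpha\gamma$. Consequently
\[
  \leftb\{ \min_{j \in N} Q_j(\gamma) \le \alpha \rightb\}
    \subseteq \leftb\{ \tfrac{1}{B}\sum_{b=1}^B \xi_b \ge \gamma \rightb\} .
\]
This inclusion is the crux: it trades the unusable ``there exists $j \in N$'' for a statement about the average of the $\xi_b$, after which Markov's inequality --- which requires no independence among the strongly dependent splits --- finishes the job,
\[
  \Prob\leftb[ \min_{j \in N} Q_j(\gamma) \le \alpha \rightb]
    \le \frac{\E\big[\tfrac{1}{B}\sum_b \xi_b\big]}{\gamma}
    = \frac{\tfrac{1}{B}\sum_b \E[\xi_b]}{\gamma}
    \le \frac{\alpha\gamma + \tfrac{1}{B}\sum_b \Prob[G_b^c]}{\gamma} .
\]
Since each $\Prob[G_b^c] \to 0$ by the screening property (A\ref{screening-property}), the second term vanishes and taking $\limsup_{n \to \infty}$ yields the bound $\alpha$, uniformly in $B$.

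I expect the main obstacle to be conceptual rather than computational: recognizing that neither of the two ``obvious'' tools works --- a union bound over $N$ is void because $|N|$ may dwarf $n$, and the $B$ splits are strongly dependent because they reuse the same sample --- and that the remedy is (i) to absorb the within-split multiplicity into the factor $|\tilde{S}^{(b)}|$ and (ii) to replace the existential quantifier over $N$ by the quantile-counting/Markov argument above. The analytic ingredients (conditional uniformity of the $t$-test p-values, the quantile-counting equivalence, Markov) are each routine; the care lies in threading them together and in correctly carrying the asymptotic slack $\Prob[G_b^c]$ from the screening event through to the $\limsup$.
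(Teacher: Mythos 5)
Your proof is correct and follows essentially the same route as the paper's: the equivalence between $\{Q_j(\gamma)\le\alpha\}$ and the event that a fraction $\ge\gamma$ of splits have $P_j^{(b)}\le\alpha\gamma$, Markov's inequality on that empirical fraction, the within-split Bonferroni bound via the factor $|\tilde{S}^{(b)}|$ together with uniformity of the out-sample p-values on the screening event, and absorption of the screening-failure probability into the $\limsup$. The only (immaterial) difference is that you take the minimum over $N$ inside each split before applying Markov, whereas the paper applies the union bound over $j\in N$ to the quantile events first and handles the screening event via the auxiliary variables $K_j^{(b)}$ rather than an additive $\Prob[G_b^c]$ term; both reduce to the same final bound $\sum_{j\in N\cap\tilde{S}^{(b)}}\alpha\gamma/|\tilde{S}^{(b)}|\le\alpha\gamma$.
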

A proof is given in the appendix. 

Theorem \ref{theo:fam-error-single} is valid for any pre-defined value of
the quantile $\gamma$.
However, the adjusted p-values $Q_j(\gamma)$ involve the somehow arbitrary
choice of $\gamma$ which might pose a problem for practical
applications. We therefore proposed the adjusted p-values $P_j$ which
search for the optimal value of $\gamma$ adaptively.  

\phantom{k}
\begin{theo}
  \label{theo:fam-error-all}
  Assume (A\ref{screening-property}) and (A\ref{sparsity-property}). Let
  $\alpha$ $\in (0, 1)$. If the null-hypothesis $H_{0,j}: \beta_j = 0$ gets
  rejected whenever $P_j \leq \alpha$, the family-wise error rate is
  asymptotically controlled at level $\alpha$, i.e.\
  \[
    \limsup_{n \to \infty} \; \Prob\leftb[\min_{j \in N} P_j \leq
    \alpha\rightb] \leq \alpha,
  \]
  where the probability $\mathbb{P}$ is as in Theorem
  \ref{theo:fam-error-single}. 
\end{theo}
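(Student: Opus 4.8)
The plan is to reduce the family-wise statement for the adaptive $p$-values $P_j$ to a one-dimensional statement about a single super-uniform sequence, and then to prove an aggregation inequality that absorbs the adaptive search over $\gamma$ into the correction factor $1-\log\gamma_{\min}$. First I would isolate the only probabilistic input. On the screening event $\{\tilde{S}^{(b)}\supseteq S\}$, which by (A\ref{screening-property}) has probability tending to $1$, the ordinary least squares $p$-value $\tilde{P}_j^{(b)}$ of a noise variable $j\in N\cap\tilde{S}^{(b)}$ is exactly uniform, since the retained model is then correctly specified and $|\tilde{S}^{(b)}|<n/2$ by (A\ref{sparsity-property}) makes the $t$-tests well defined. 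Hence, conditionally on the selection, $\Prob[P_j^{(b)}\le u]\le u/|\tilde{S}^{(b)}|$, and a union bound over $N\cap\tilde{S}^{(b)}$ shows that $\pi^{(b)}:=\min_{j\in N}P_j^{(b)}$ is super-uniform, $\Prob[\pi^{(b)}\le u]\le u$, up to an $o(1)$ term coming from screening failures. This is exactly where the $|\tilde{S}^{(b)}|$-multiplicity correction in (\ref{eq:pvaladj}) does its work.

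Second, I would note that the aggregation is monotone, so the whole family-wise event collapses onto this single sequence. If $P_j\le\alpha$ for some $j\in N$, then $Q_j(\gamma)\le\alpha/(1-\log\gamma_{\min})=:\alpha'$ for some $\gamma>\gamma_{\min}$, i.e.\ at least a $\gamma$-fraction of the splits have $P_j^{(b)}\le\alpha'\gamma$; for each such split $\pi^{(b)}\le P_j^{(b)}\le\alpha'\gamma$, so the analogous quantile inequality holds for $\{\pi^{(b)}\}$. Writing $\bar{Q}(\gamma)$ and $\bar{P}$ for the quantities (\ref{eq:quant-pval}) and (\ref{eq:quant-pval-opt}) built from $\{\pi^{(b)}\}$, this yields $\Prob[\min_{j\in N}P_j\le\alpha]\le\Prob[\bar{P}\le\alpha]$. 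Equivalently, letting $\hat{F}$ denote the empirical distribution function of $\{\pi^{(b)}\}$, the event $\{\bar{P}\le\alpha\}$ is the boundary-crossing event $\{\exists\,\gamma\in(\gamma_{\min},1):\hat{F}(\alpha'\gamma)\ge\gamma\}$ --- precisely the picture of Figure~\ref{fig:hist}.

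The main step, and the real obstacle, is to show $\Prob[\exists\,\gamma>\gamma_{\min}:\hat{F}(\alpha'\gamma)\ge\gamma]\le\alpha'(1-\log\gamma_{\min})=\alpha$. For a single fixed $\gamma$ this is just Markov's inequality applied to $\hat{F}(\alpha'\gamma)$ together with super-uniformity of $\pi^{(b)}$, which is the content of Theorem~\ref{theo:fam-error-single}; the difficulty is the continuum of values of $\gamma$. A naive union bound over $\gamma$ loses a factor of order $B$, so I would instead decompose by the largest crossing index. Since $\hat{F}$ jumps only at the $\pi^{(b)}$, a crossing is equivalent to $X_j:=|\{b:\pi^{(b)}\le\alpha' j/B\}|\ge j$ for some integer $j>B\gamma_{\min}$, and I would bound $\Prob[\exists\,j>B\gamma_{\min}:X_j\ge j]$ on the disjoint events $\{K=j\}$, $K$ the largest crossing index, by $\sum_b\E[(1/K)\,\mathbf{1}(\pi^{(b)}\le\alpha' K/B)]$. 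Using $1/K\le\min(1/j_0,\,\alpha'/(B\pi^{(b)}))$ (where $j_0$ is the smallest integer exceeding $B\gamma_{\min}$) and integrating against the super-uniform law of $\pi^{(b)}$ gives $\frac{\alpha'}{B}(1+\log(B/j_0))\le\frac{\alpha'}{B}(1-\log\gamma_{\min})$ per split; summing over the $B$ splits yields the claim. Crucially, this argument uses only the first moments $\Prob[\pi^{(b)}\le u]\le u$, so it is valid under the arbitrary dependence among the $B$ splits induced by the shared data, and the logarithmic correction $1-\log\gamma_{\min}=1+\int_{\gamma_{\min}}^1\gamma^{-1}\,d\gamma$ emerges as exactly the cost of the adaptive search --- the same $\sum_i i^{-1}$ mechanism underlying the Benjamini--Yekutieli bound cited for the FDR variant.

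Finally, combining the crossing bound with $\Prob[\text{screening fails}]\to0$ gives $\limsup_{n\to\infty}\Prob[\min_{j\in N}P_j\le\alpha]\le\alpha$, as required.
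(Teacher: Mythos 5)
Your proof is correct and arrives at the same constant $1-\log\gamma_{\min}$, but by a genuinely different decomposition than the paper's. The paper keeps everything per noise variable and per split: it replaces $P_j^{(b)}$ by $K_j^{(b)}$ (forced to $1$ on screening failures), computes the closed-form expectation $\E\bigl[\sup_{\gamma\in(\gamma_{\min},1)}\gamma^{-1}1\{U\le\alpha\gamma\}\bigr]=\alpha(1-\log\gamma_{\min})$ for uniform $U$, exchanges the supremum over $\gamma$ with the average over $b$ (sup of the average is at most the average of the sups), applies Markov to turn $1\{\pi_j(\alpha\gamma)\ge\gamma\}$ into $\pi_j(\alpha\gamma)/\gamma$, and only at the very end sums $\Prob[P_j\le\alpha]$ over $j\in N$. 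You instead collapse onto the single super-uniform sequence $\min_{j\in N}P_j^{(b)}$ first, exploiting monotonicity of the quantile aggregation, and then control the boundary-crossing probability by conditioning on the largest crossing index in the style of Benjamini--Yekutieli; your integral $\int_0^{\alpha'}\min(1/j_0,\alpha'/(Bu))\,du$ is the paper's integral in disguise, discretized at the jump points of the empirical distribution function. Both arguments use only the marginal bound $\Prob[\cdot\le u]\le u$ and therefore tolerate the arbitrary dependence across splits. What the paper's ordering buys is the stronger intermediate inequality $\sum_{j\in N}\Prob[P_j\le\alpha]\le\alpha$, i.e.\ (\ref{eq:PBOUND}), which is reused in the proof of Theorem~\ref{theo:fdr-all} and for the expected-number-of-false-positives control in Section~\ref{section:extensions}; your route delivers only the FWER statement directly, though rerunning your crossing argument variable by variable would recover (\ref{eq:PBOUND}) too. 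What your route buys is an explicit account of the step-up/crossing picture of Figure~\ref{fig:hist} and avoidance of the sup--average interchange. Two minor points to tidy: the equivalence between a crossing at some $\gamma\in(\gamma_{\min},1)$ and $X_j\ge j$ for an integer $j$ depends on the empirical-quantile convention in (\ref{eq:quant-pval}) and should handle the endpoint $j=B$; and the step $\E[h(\pi^{(b)})]\le\int_0^1h(u)\,du$ for your nonincreasing $h$ deserves a one-line justification via stochastic domination of the uniform by a super-uniform variable.
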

A proof is given in the appendix. 

A brief remark regarding the asymptotic nature of the results seems in order. The proposed error control relies on all truly important variables being selected in the screening stage with very high probability. This is our \emph{screening property} (A\ref{screening-property}). Let $\mathcal{A}$ be the event $S\subseteq \tilde{S}$. The results above for example in Theorem \ref{theo:fam-error-all} can be formulated in a non-asymptotic way as 
 $
    \Prob[\mathcal{A} \; \cap \; \{\min_{j \in N} P_j \leq
    \alpha\}] \leq \alpha,
  $
and $P(\mathcal{A})\rightarrow 1$, typically exponentially fast, for $n\rightarrow\infty$. Analogous remarks apply to Theorem \ref{theo:fam-error-single} and \ref{theo:fdr-all} below.

\subsection{FDR control}
The adjusted p-values can be used for FDR control, as laid out in Section \ref{section:FDR}. The set of selected variables $\hat{S}_{\mathit{multi;FDR}}$ was defined in (\ref{eq:selected-model-FDR}). Here, we show that FDR is indeed controlled at the desired rate with this procedure.

\phantom{k}
\begin{theo}
  \label{theo:fdr-all}
  Assume (A\ref{screening-property}) and (A\ref{sparsity-property}). Let
  $q \in (0,1)$. Let $\hat{S}_{\mathit{multi;FDR}}$ be the set of selected variables, as defined in (\ref{eq:selected-model-FDR}) and $V=|\hat{S}_{\mathit{multi;FDR}}\cap N|$ and $R=|\hat{S}_{\mathit{multi;FDR}}|$. The false discovery rate (\ref{eq:FDRdef}) with $Q=V/\max\{1,R\}$ is then 
  asymptotically controlled at level $q\sum_{i=1}^p i^{-1}$, i.e.\
  \[
 \limsup_{n \to \infty}\; \E(Q) \; \le\;  q \sum_{i=1}^p \frac{1}{i}.
  \]
\end{theo}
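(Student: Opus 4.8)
The plan is to recast the selection rule (\ref{eq:selected-model-FDR})--(\ref{hFDR}) as a Benjamini--Yekutieli step \citep{benjamini01control} applied to the \emph{already} multiplicity-adjusted p-values, and to run the associated telescoping argument conditionally on the screening event $\calA=\{S\subseteq\tilde S\}$. The reason the threshold in (\ref{hFDR}) is $iq$ rather than the usual $\tfrac{i}{p}q$ is that each $P_j$ already carries the Bonferroni factor $|\tilde{S}^{(b)}|$ introduced in (\ref{eq:pvaladj}); consequently the entire argument rests on a single super-uniformity bound, but stated at the level of \emph{expected counts}. First I would establish the key lemma: on $\calA$, for every $t\in(0,1)$,
\[
  G(t)\;:=\;\sum_{j\in N}\Prob\big[\{P_j\le t\}\cap\calA\big]\;=\;\E\big[\,|\{j\in N:P_j\le t\}|\,\mathbf{1}_{\calA}\,\big]\;\le\;t .
\]
This is the strengthening of the family-wise statement of Theorem~\ref{theo:fam-error-all} from $\Prob[\min_{j\in N}P_j\le t]\le t$ to a bound on the \emph{expected number} of false rejections below $t$.

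To prove the lemma I would reuse the machinery of Theorem~\ref{theo:fam-error-all} in expectation. Per split $b$, on $\calA$ the ordinary least squares p-values $\tilde P^{(b)}_j$ of the selected noise variables $j\in N\cap\tilde S^{(b)}$ are (super-)uniform, so by (\ref{eq:pvaladj}),
\[
  \E\big[\,|\{j\in N:P^{(b)}_j\le t\}|\ \big|\ \tilde S^{(b)}\big]\;\le\; |N\cap\tilde S^{(b)}|\cdot\frac{t}{|\tilde S^{(b)}|}\;\le\; t ,
\]
i.e. the Bonferroni factor $|\tilde S^{(b)}|$ is exactly what cancels the number of selected nulls. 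Passing from the per-split counts to the aggregated quantiles $Q_j(\gamma)$ and to the adaptive $P_j$ of (\ref{eq:quant-pval-opt}) is then handled by the same device used for the FWER proof: the union bound over a dyadic grid of $\gamma\in(\gamma_{\min},1)$, whose integral cost is the correction factor $1-\log\gamma_{\min}$. Because expectation of a count is additive over hypotheses, this union-bound step runs in parallel to the one controlling the probability of a single false rejection.

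Granting the lemma, the Benjamini--Yekutieli telescoping is routine. I would write $\E[Q\,\mathbf{1}_{\calA}]=\sum_{j\in N}\E\big[\mathbf{1}\{j\in\hat S_{\mathit{multi;FDR}}\}/\max\{1,R\}\cdot\mathbf{1}_{\calA}\big]$ and observe that if a null $j$ is selected while $R=r$ rejections are made, then $P_j\le P_{(r)}\le rq$, and that if moreover $P_j\in((i-1)q,iq]$ the number of rejections must satisfy $R\ge i$, so that $\mathbf{1}\{j\in\hat S_{\mathit{multi;FDR}}\}/\max\{1,R\}\le\sum_{i=1}^{p} i^{-1}\,\mathbf{1}\{(i-1)q<P_j\le iq\}$. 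Summing over $j\in N$, exchanging the order of summation, and summing by parts gives
\[
  \E\big[Q\,\mathbf{1}_{\calA}\big]\;\le\;\sum_{i=1}^{p}\frac{1}{i}\big(G(iq)-G((i-1)q)\big)\;=\;\frac{G(pq)}{p}+\sum_{i=1}^{p-1}\frac{G(iq)}{i(i+1)}\;\le\;q\sum_{i=1}^{p}\frac{1}{i},
\]
where the final inequality inserts the key lemma $G(iq)\le iq$. The unconditional bound then follows by adding $\E[Q\,\mathbf{1}_{\calA^{c}}]\le\Prob(\calA^{c})$, which tends to $0$ under the screening property (A\ref{screening-property}), and taking $\limsup_{n\to\infty}$.

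The main obstacle will be the expected-count lemma for the adaptively aggregated p-values, specifically showing that the adaptive search $\inf_{\gamma}Q_j(\gamma)$ in (\ref{eq:quant-pval-opt}) does not inflate the expected number of false rejections by more than the $1-\log\gamma_{\min}$ factor; this is the only place where the multiplicity over $\gamma$ and over splits interacts with the counting, and it is what distinguishes the argument from a verbatim invocation of \citet{benjamini01control}. A secondary, purely technical point is that $G(t)\le t$ degrades as $t\uparrow 1$ because of the atom of each $P_j$ at $1$ (unselected noise variables); thresholds with $iq\ge1$ and the corresponding degenerate regime therefore require a separate, elementary treatment, for instance by capping $G(t)\le\min\{t,1\}$ and noting that such terms contribute to the bound only when $q\sum_{i} i^{-1}$ is already non-informative.
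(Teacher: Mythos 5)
Your proposal follows essentially the same route as the paper: the paper's proof of Theorem \ref{theo:fdr-all} likewise rests on the expected-count bound $\sum_{j\in N}\Prob[P_j\le \alpha]\le\alpha$ --- which is exactly your lemma $G(t)\le t$ and appears verbatim as (\ref{eq:PBOUND}) in the proof of Theorem \ref{theo:fam-error-all} --- followed by the Benjamini--Yekutieli rearrangement and Abel summation over the intervals $((i-1)q,\,iq]$. The only differences are presentational: the paper imports the $p_{ijk}$/$C_k^{(i)}$ formalism of \citet{benjamini01control} where you re-derive the pointwise bound on $\mathbf{1}\{j\in\hat{S}_{\mathit{multi;FDR}}\}/\max\{1,R\}$ directly, and your explicit treatment of the conditioning on $\calA$ and of the thresholds with $iq\ge 1$ is, if anything, more careful than the paper's.
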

A proof is given in the appendix.

As with FWER-control, we could be using, for any fixed value of $\gamma$, the values $Q_j(\gamma)$, $j=1,\ldots,p$ instead of $P_j$, $j=1,\ldots,n$. We refrain from giving the full details since, in our experience, the adaptive version above works reliably and does not require an a-priori choice of the quantile $\gamma$ that is necessary otherwise.

\subsection{Model Selection Consistency}
If we let level $\alpha = \alpha_n \to 0$ for $n \to \infty$, the
probability of falsely including a noise variable vanishes because of the
preceding results. In order to get the property of consistent model
selection, we have to analyze the asymptotic behavior of the power. It
turns out that this property is inherited from the single-split method.

\phantom{k}
\begin{coro}
  \label{coro:model-consistency}
  Let $\hat{S}_{\mathit{single}}$ be the selected model of the single-split
  method. Assume that $\alpha_n \to 0$ can be chosen for $n \to \infty$ at
  a rate such that $\lim_{n \to \infty} \Prob[\hat{S}_{\mathit{single}} =
  S] = 1$.  Then, for any $\gamma_{\min}$ (see (\ref{eq:quant-pval-opt})),
  the multi-split method is also model selection consistent for a suitable
  sequence $\alpha_n$, i.e.\ for $\hat{S}_{\mathit{multi}} = \{j \in
  \tilde{S}; P_j \leq \alpha_n\}$ it holds that
  \[
    \lim_{n \to \infty} \Prob\leftb[\hat{S}_{\mathit{multi}} = S\rightb] = 1.
  \]
\end{coro}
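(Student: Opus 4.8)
The plan is to prove consistency by splitting it into its two constituent requirements: that with probability tending to one no noise variable is selected (no false positives) and that every active variable is selected (no false negatives), and then intersecting these two events. Throughout I would take the multi-split threshold to be a fixed multiple of the single-split sequence, $\alpha_n := (1 - \log\gamma_{\min})\,\alpha_n^{\mathit{single}}$, where $\alpha_n^{\mathit{single}} \to 0$ is the sequence for which $\Prob[\hat{S}_{\mathit{single}} = S] \to 1$ holds by hypothesis; since $1 - \log\gamma_{\min}$ is a constant, this rescaled sequence is still a valid choice with $\alpha_n \to 0$.

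The no-false-positive direction is immediate from the non-asymptotic form of Theorem \ref{theo:fam-error-all} recorded in the remark. On the screening event $\calA = \{S \subseteq \tilde{S}\}$ one has $\Prob[\calA \cap \{\min_{j \in N} P_j \le \alpha_n\}] \le \alpha_n$, and combining with $\Prob[\calA^c] \to 0$ yields $\Prob[\min_{j\in N} P_j \le \alpha_n] \le \alpha_n + \Prob[\calA^c] \to 0$. Hence $N \cap \hat{S}_{\mathit{multi}} = \emptyset$ with probability tending to one.

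The substantive part is the power direction, and this is where I expect the main obstacle to lie: I must show that aggregating across quantiles and searching adaptively over $\gamma$ does not wash out the power that each individual split inherits from the single-split method. The key observation is that each of the $B$ splits is, marginally, an instance of the single-split procedure, so $\Prob[\hat{S}_{\mathit{single}}^{(b)} = S] \to 1$ for every $b$. Let $G = \bigcap_{b=1}^{B}\{\hat{S}_{\mathit{single}}^{(b)} = S\}$. Because $B$ is fixed, a union bound gives $\Prob[G^c] \le B\,\Prob[\hat{S}_{\mathit{single}} \neq S] \to 0$, requiring no independence of the splits. On $G$, every active variable $j \in S$ satisfies $P_j^{(b)} \le \alpha_n^{\mathit{single}}$ for all $b$, so every empirical quantile obeys $q_\gamma(\{P_j^{(b)}\}) \le \alpha_n^{\mathit{single}}$ and thus $Q_j(\gamma) \le \alpha_n^{\mathit{single}}/\gamma$. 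Evaluating the infimum defining $P_j$ along $\gamma \uparrow 1$ gives $\inf_{\gamma} Q_j(\gamma) \le \alpha_n^{\mathit{single}}$, whence $P_j \le (1 - \log\gamma_{\min})\,\alpha_n^{\mathit{single}} = \alpha_n$. So on $G$ all of $S$ is selected.

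Finally I would combine: on $G \cap \{\min_{j\in N} P_j > \alpha_n\}$ we have both $S \subseteq \hat{S}_{\mathit{multi}}$ and $N \cap \hat{S}_{\mathit{multi}} = \emptyset$, hence $\hat{S}_{\mathit{multi}} = S$; since each complementary event has probability tending to zero, $\Prob[\hat{S}_{\mathit{multi}} = S] \to 1$. The only point demanding genuine care is verifying that the constant $1 - \log\gamma_{\min}$ exactly absorbs the adaptive $\gamma$-correction, so that the rescaled threshold still vanishes while carrying the single-split power through the quantile aggregation; the remaining pieces are direct consequences of Theorem \ref{theo:fam-error-all} and the fixed-$B$ union bound.
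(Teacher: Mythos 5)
Your proof is correct and takes essentially the same route as the paper's: a union bound over the fixed $B$ splits gives $\max_{b} P_j^{(b)} \le \alpha_n^{\mathit{single}}$ for all $j \in S$ with probability tending to one, hence $\inf_{\gamma} Q_j(\gamma) \le \alpha_n^{\mathit{single}}$ by taking $\gamma$ near $1$, and the threshold is rescaled by the constant $1-\log\gamma_{\min}$. You are merely more explicit than the paper about the no-false-positives half, which the paper delegates to Theorem~\ref{theo:fam-error-all} with $\alpha_n \to 0$ in the text preceding the corollary.
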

\cite{wasserman08highdim} discuss conditions which ensure that $\lim_{n \to
  \infty} \mathbb{P}[\hat{S}_{single} = S] = 1$ for various variable
selection methods such as the Lasso or some forward variable selection
scheme.

The reverse of the Corollary above is not necessarily true. The multi-split
method can be consistent if the single-split method is not. A necessary condition for consistency of the single-split method is $\limsup_{n\to\infty} \Prob[ P_j^{(b)}\le \alpha] =1  $  for all $j\in S$, where the probability is with respect to both the data and the
random split-point, as there is a positive probability otherwise that variable $j$ will not be selected with the single-split approach. For the multi-split method, on the other hand, we only need a bound on quantiles of $P_j^{(b)}$ over $b=1,\ldots,B$.  We refrain from going into more
details here and rather show with numerical results that the multi-split
method is indeed more powerful than the single-split analogue. 
We also remark that the Bonferroni correction in (\ref{eq:pvaladj}), multiplying the raw p-values with the number $|\tilde{S}^{(b)}|$ of selected variables, could possibly be improved upon by using ideas in \citet{hothorn08simultaneous}, further improving the power of the procedure.

\section{Numerical Results}
\label{section:numerical}
In this section we compare the empirical performance of the different
estimators on simulated and real data sets. Simulated data allow a thorough
evaluation of the model selection properties. The real data set shows that
we can find signals in data with our proposed method that would not be
picked up by the single-split method. We use a default value of $\alpha =
0.05$ everywhere.

\subsection{Simulations}
We use the following simulation settings:
\begin{enumerate}[(A)]
  \item Simulated data set with $n = 100$, $p = 100$ and a design matrix
    coming from a centered multivariate normal distribution with covariance
    structure $\Cov(X_j, X_k) = \rho^{\abs{j - k}}$ with $\rho=0.5$. \label{sim-lowdim}
  \item As (\ref{sim-lowdim}) but with $n = 100$ and $p = 1000$. 
    \label{sim-highdim}
  \item \label{sim-dsm} Real data set with $n = 71$ and $p = 4088$ for the
    design matrix $X$ and artificial response~$Y$.
\end{enumerate}
The data set in (\ref{sim-dsm}) is from gene expression measurements in
Bacillus Subtilis. The $p = 4088$ predictor variables are log-transformed
gene expressions and there is a response measuring the logarithm of the
production rate of riboflavin in Bacillus Subtilis. The data is kindly
provided by DSM (Switzerland). As the true variables are not known, we
consider a linear model with design matrix from real data and simulating a
sparse parameter vector $\beta$ as follows. In each simulation run, a new
parameter vector $\beta$ is created by either `uniform' or
`varying-strength' sampling. Under `uniform' sampling, $\abs{S}$ randomly
chosen components of $\beta$ are set to 1 and the remaining $p - \abs{S}$
components to 0. Under `varying-strength' sampling, $\abs{S}$ randomly
chosen components of $\beta$ are set to values $1,\ldots,\abs{S}$. The
error variance $\sigma^2$ is adjusted such that the signal to noise ratio
(SNR) is maintained at a desired level at each simulation run. We perform
50 simulations for each setting.

The sample-splitting is done such that the model is trained on a data set
of size $\lfloor (n-1)/2 \rfloor$ and the p-values are calculated
on the remaining data set. This slightly unbalanced scheme prevents us from
situations where the full model might be selected on the first
data set. Calculations of p-values would not be possible on the remaining
data in such a situation. We use a total of $B = 50$ sample-splits for each
simulation run. As in \citet{wasserman08highdim}, we compute p-values for all procedures using a normal approximation. Results are qualitatively similar when using a t-distribution instead. 

We compare the average number of true positives and the family-wise error
rate (FWER) for the single- and multi-split methods for all three simulation
settings (A)--(C) and vary in each the SNR to 0.25, 1, 4 and 16 (which
corresponds to population $R^2$ values of 0.2, 0.5, 0.8 and 0.94,
respectively). The number $|S|$ of relevant variables is either 5 or 10. As
initial variable selection or screening method $\tilde{S}$ we use three
approaches, which are all based on the Lasso \citep{tibshirani96lasso}. The
first one, denoted by $\tilde{S}_{fixed}$, uses the Lasso and selects those
$\lfloor n/6 \rfloor$ variables which appear most often in the
regularization path when varying the penalty parameter. The constant number of $\lfloor n/6 \rfloor$ variables is chosen, somewhat arbitrarily, to ensure a reasonably large set of selected coefficients on the one hand and to ensure, on the other hand,
that least squares estimation will work reasonably well on the second half of the data with sample size $\lfloor n/2 \rfloor$. While the choice seems to work well in practice and can be implemented very easily and efficiently, it is still slightly arbitrary. Avoiding any such choices of non-data adaptive tuning parameters, the second method,
$\tilde{S}_{cv}$, uses the Lasso with penalty parameter chosen by 10-fold
cross-validation and selecting the variables whose corresponding estimated
regression coefficients are different from zero. The third method,
$\tilde{S}_{adap}$, is the adaptive Lasso of \citet{zou06adalasso} where
regularization parameters are chosen based on 10-fold cross-validation with
the Lasso solution used as initial estimator for the adaptive Lasso. The
selected variables are again the ones whose corresponding estimated
regression parameters are different from zero.

Results are shown in Figures \ref{fig:n100p100} and \ref{fig:DSM} for both
the single-split method and the multi-split method with the default setting
$\gamma_{\min}=0.05$. Using the multi-split method, the average number of
true positives (the variables in $S$ which are selected) is typically
slightly increased while the FWER (the probability of including variables
in $N$) is reduced sharply. The single-split method has often a FWER above
the level $\alpha=0.05$ at which it is asymptotically controlled while for
the multi-split method the FWER is above the nominal level only in few
scenarios. 
The asymptotic control seems to give a
good control in finite sample settings with the multi-split method, maybe apart from the method $\tilde{S}_{fixed}$ on the very high-dimensional dataset (C).
The  single-split method, in contrast, selects in nearly all settings too many noise variables, exceeding the desired FWER sometimes substantially. This suggests that the asymptotic error control seems to work better for finite sample sizes for the multi-split method.  Even though the multi-split method is more
conservative than the single-split method (having substantially lower
FWER), the number of true discoveries is often increased. We note that for
data (\ref{sim-dsm}), with $p = 4088$, and in general for low SNR, the
number of true positives is low since we control the very stringent
family-wise error criterion at $\alpha = 0.05$ significance level. As an
alternative, controlling less conservative error measures is possible
and is discussed in Section~\ref{section:extensions}.

We also experimented with using the value of $Q_j(\gamma)$ directly as an
adjusted p-value, without the adaptive choice of $\gamma$ but using a fixed
value $\gamma=0.5$ instead, i.e.\ looking at twice the median value of all
p-values across multiple data splits, as suggested in a different context by \citet{vandewiel2009tpe}. The results were not as convincing as
for the adaptive choice and we recommend the adaptive version with
$\gamma_{\min}=0.05$ as a good default choice.

\begin{figure}
\includegraphics[width=0.32\textwidth]{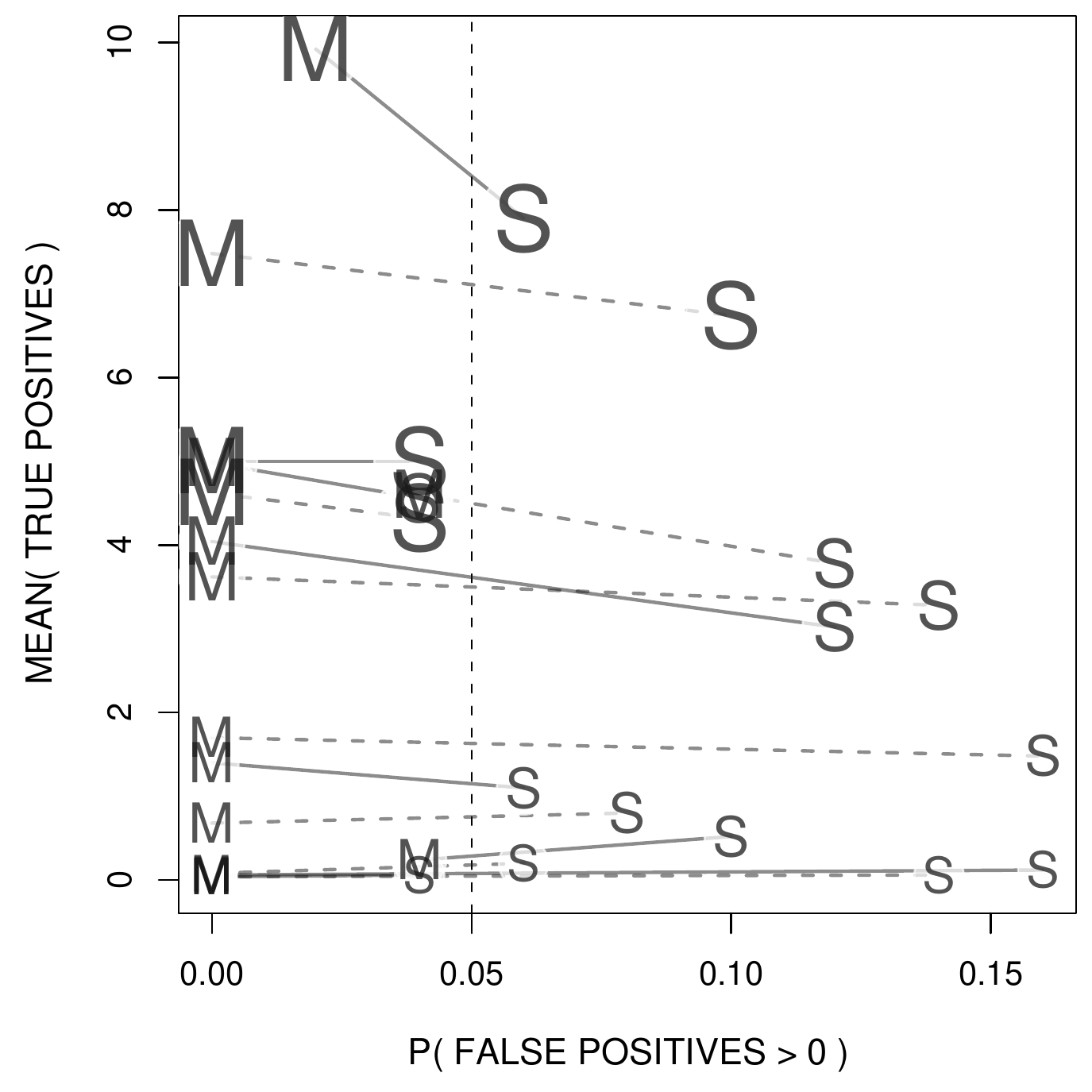}
\includegraphics[width=0.32\textwidth]{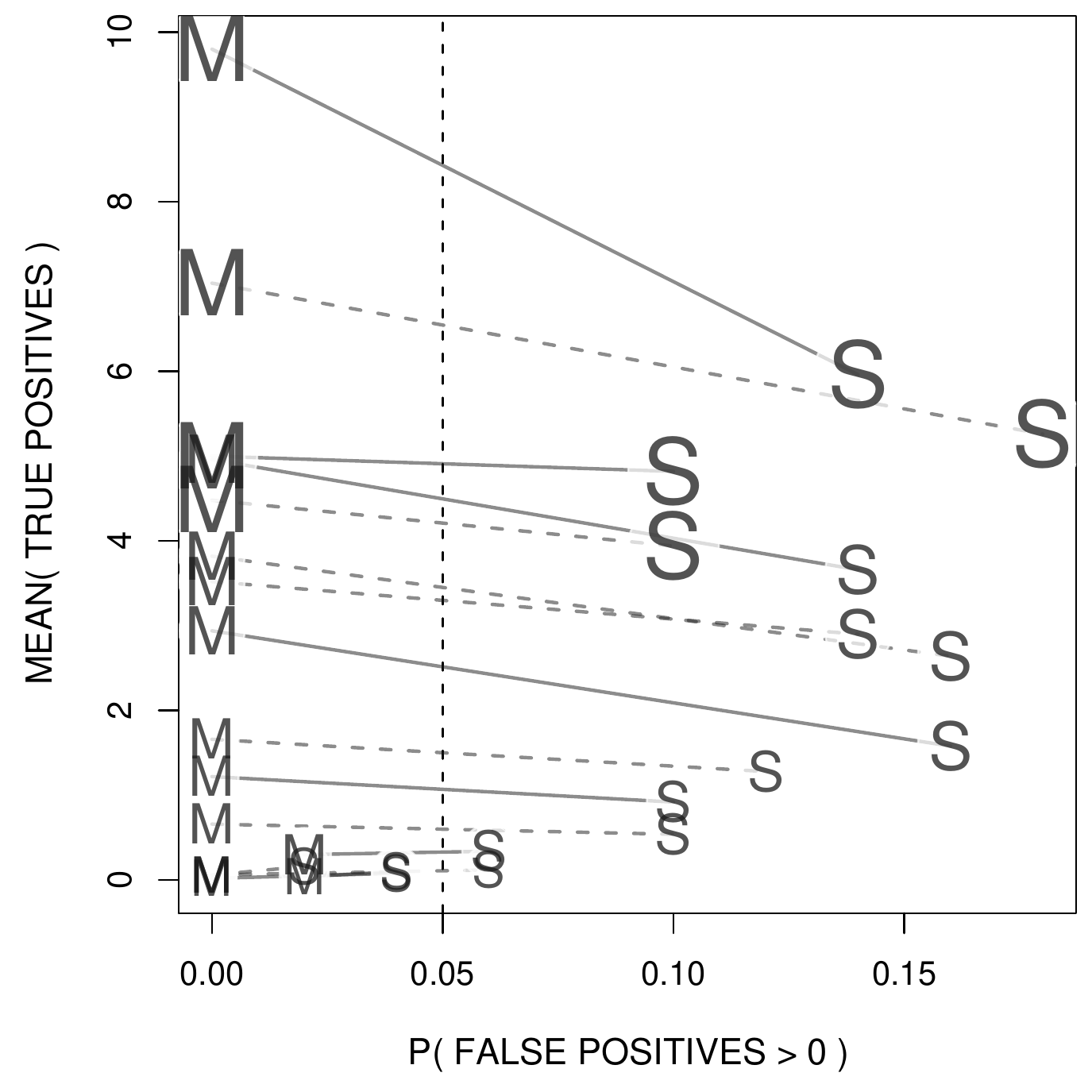}
\includegraphics[width=0.32\textwidth]{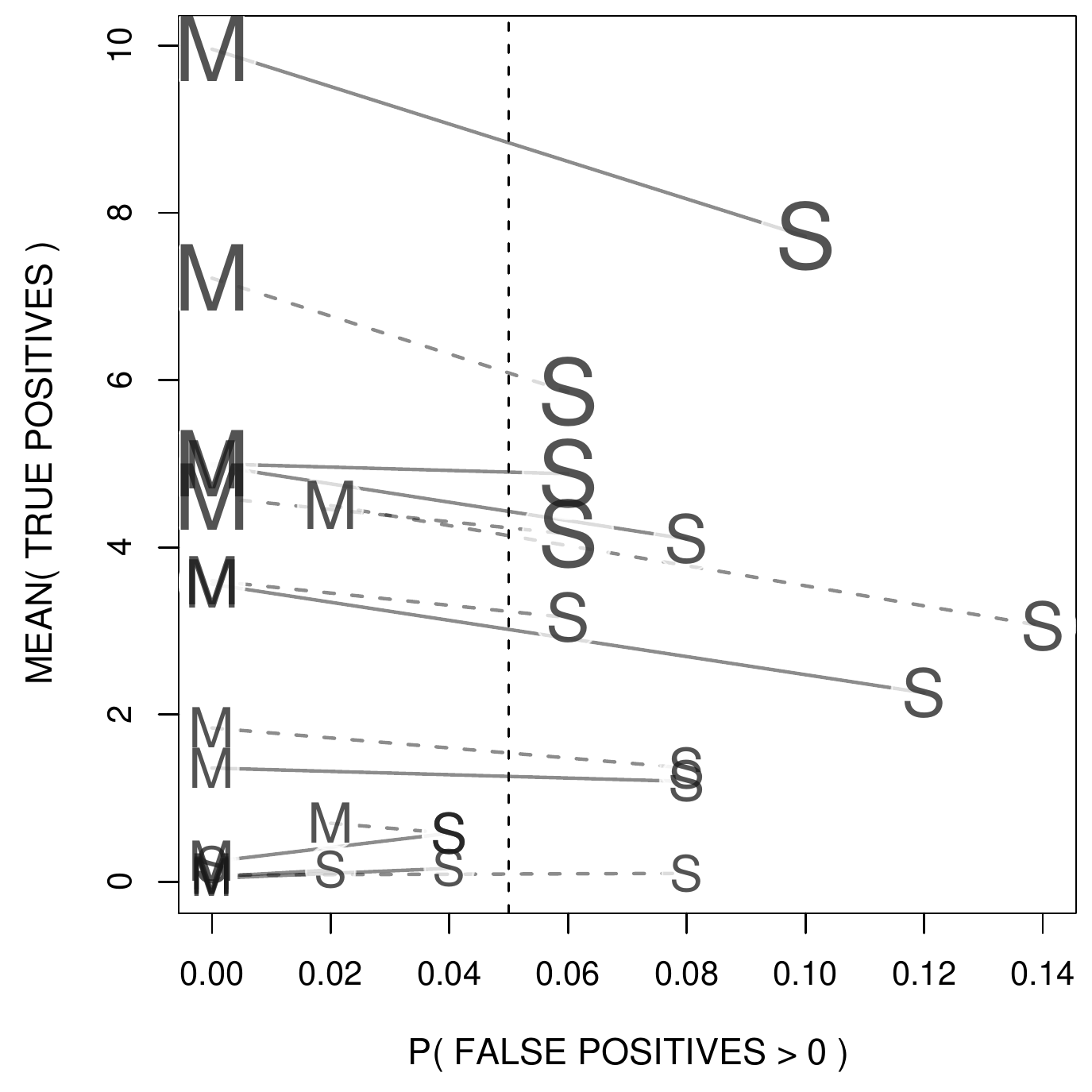}

\includegraphics[width=0.32\textwidth]{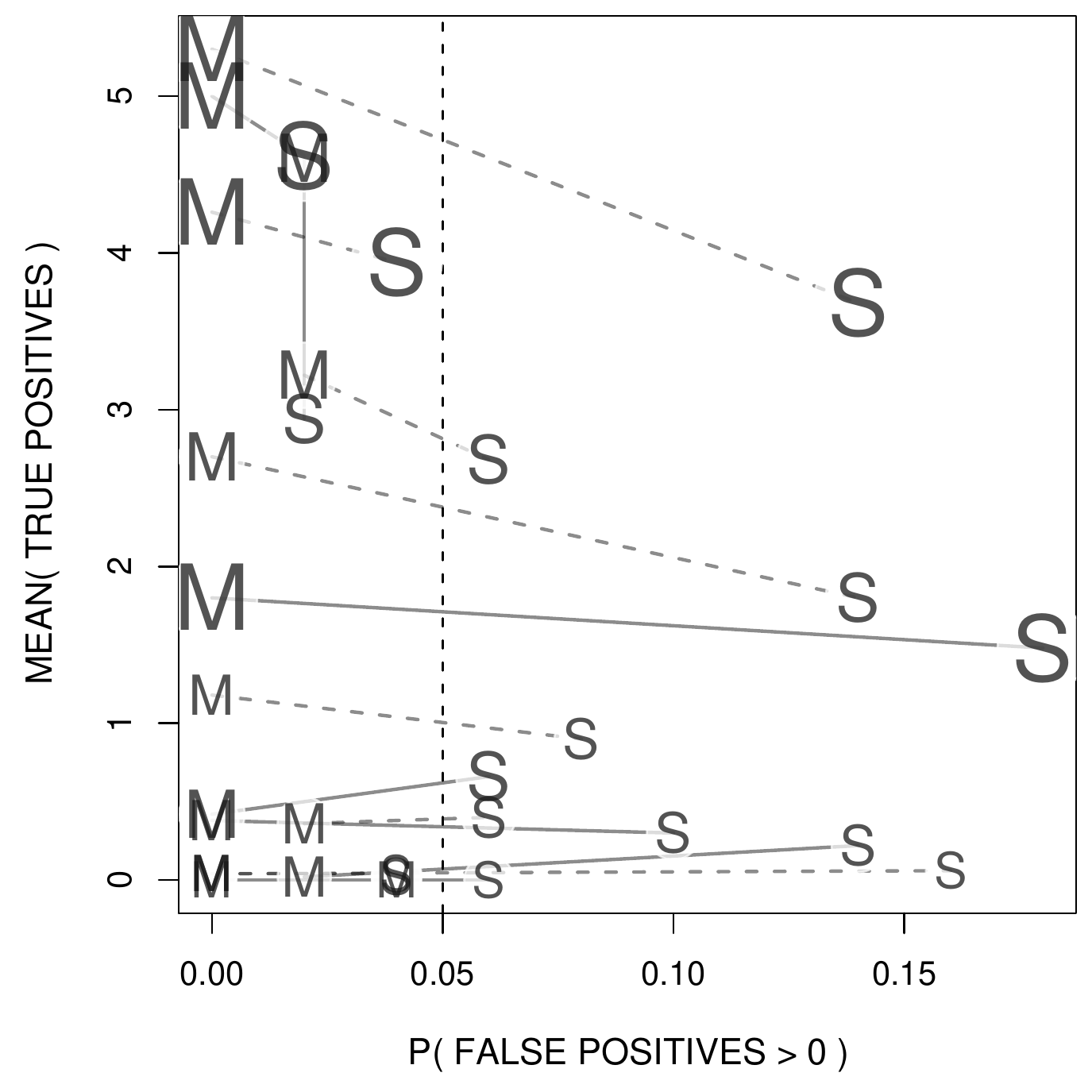}
\includegraphics[width=0.32\textwidth]{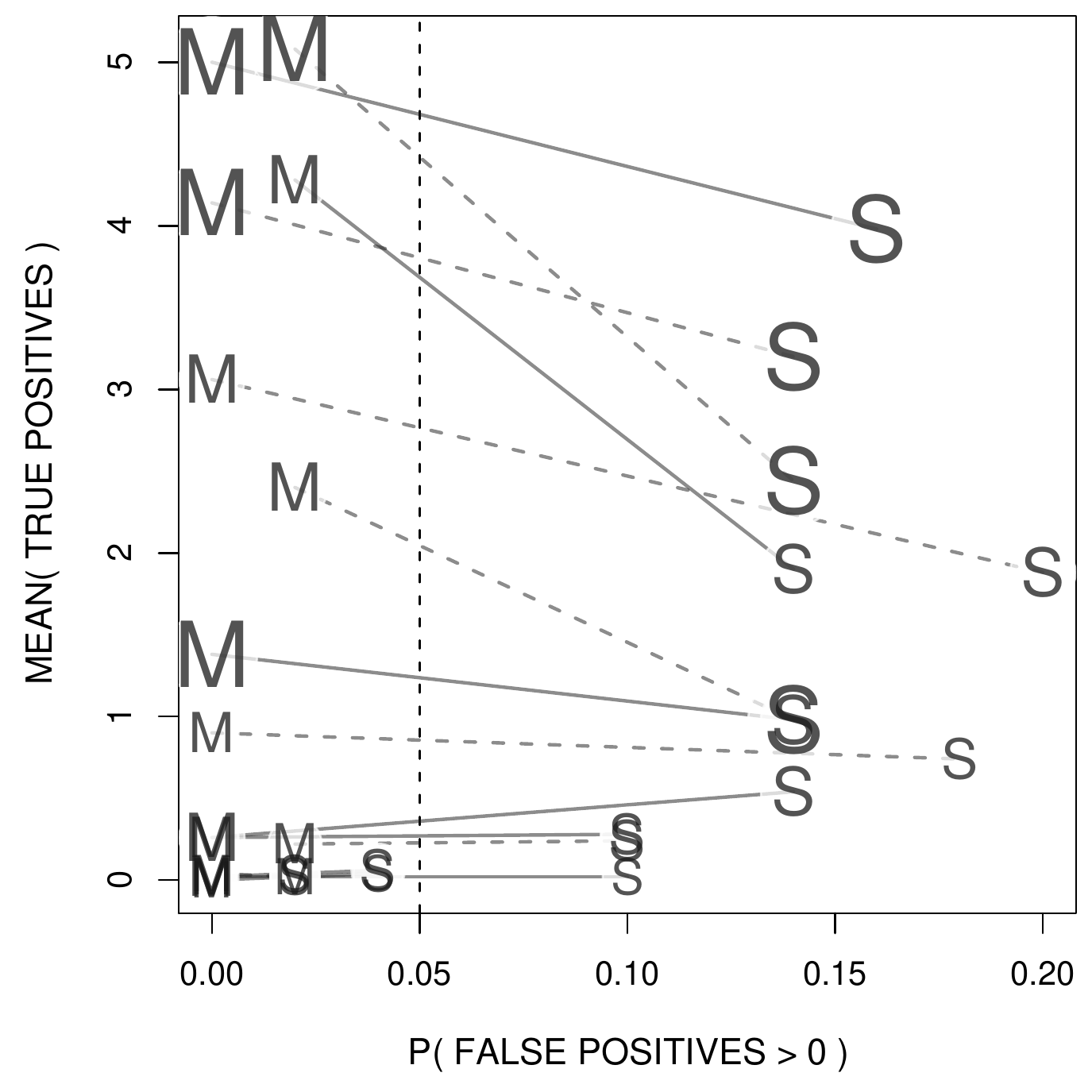}
\includegraphics[width=0.32\textwidth]{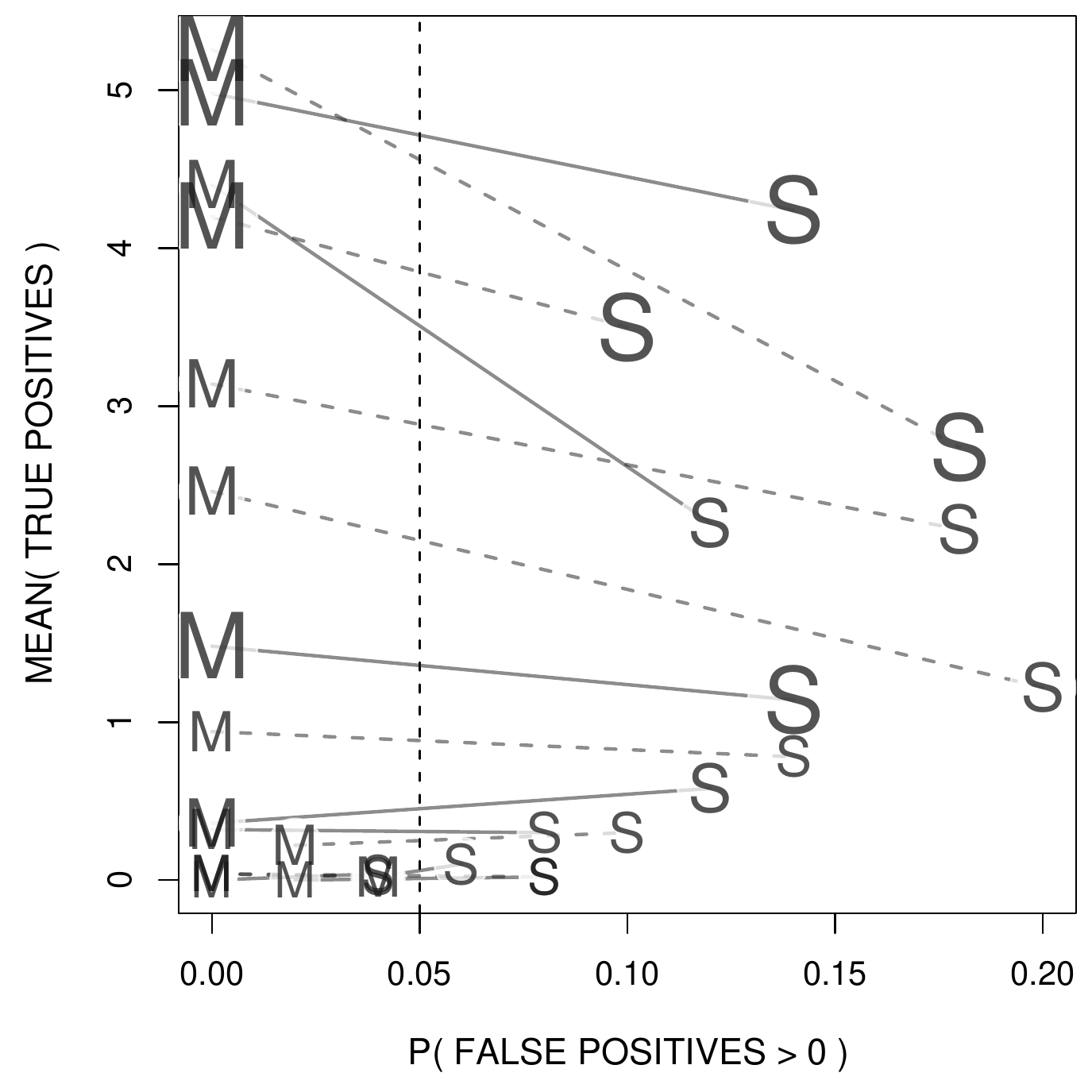}
\caption{\textit{\label{fig:n100p100} Simulation results for setting
    (\ref{sim-lowdim}) in the top and (\ref{sim-highdim}) in the bottom
    row. Average number of true positives vs.\ the family-wise error rate
    (FWER) for the single split method (`S') against the multi-split
    version (`M'). FWER is controlled (asymptotically) at $\alpha=0.05$ for
    both methods and this value is indicated by a broken vertical
    line. From left to right are results for $\tilde{S}_{fixed}$,
    $\tilde{S}_{cv}$ and $\tilde{S}_{adap}$. Results of a unique setting of
    SNR, sparsity and design are joined by a line, which is solid if the
    coefficients follow the `uniform' sampling and broken
    otherwise. Increasing SNR is indicated by increasing symbol size.}}
\end{figure}

\begin{figure}
\includegraphics[width=0.32\textwidth]{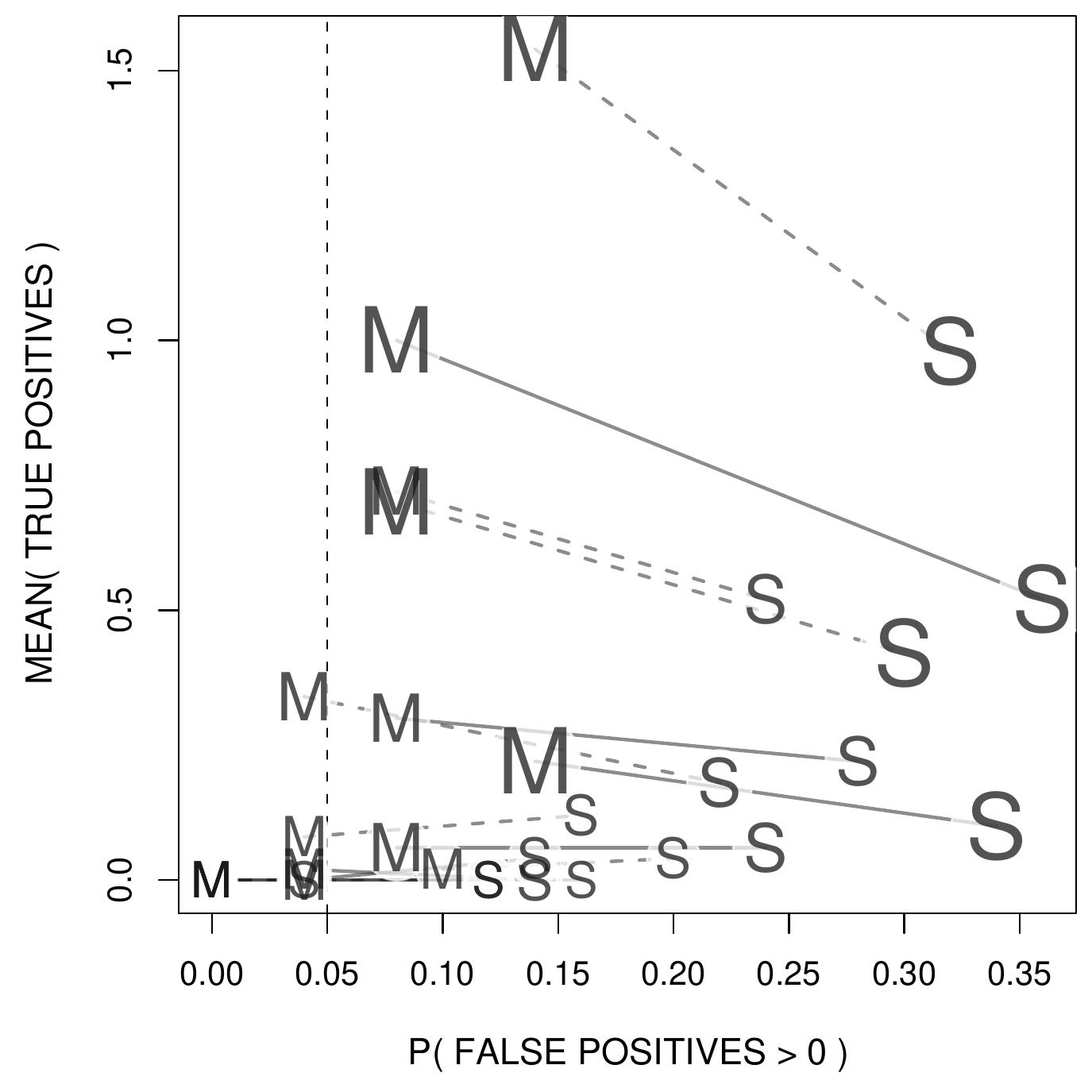}
\includegraphics[width=0.32\textwidth]{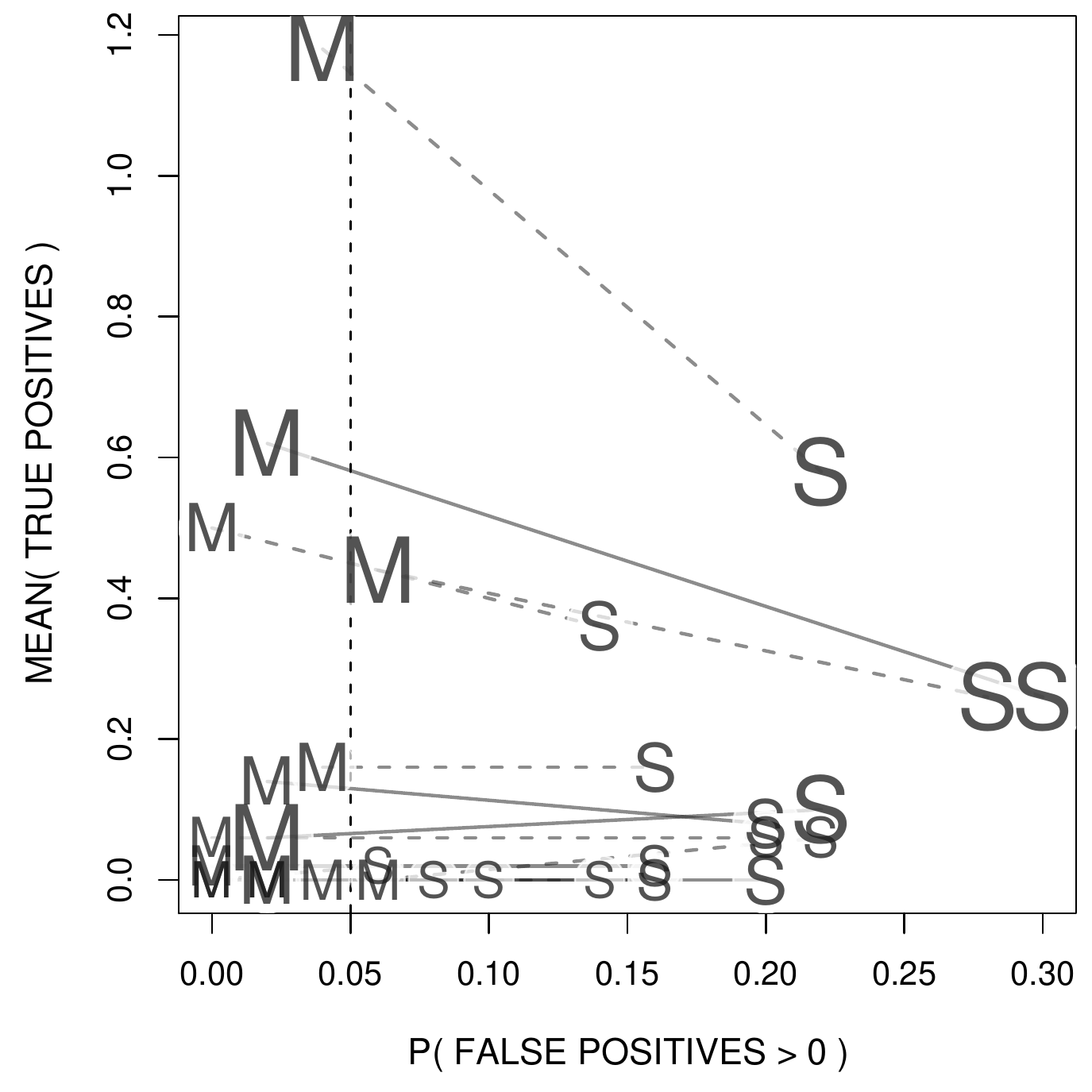}
\includegraphics[width=0.32\textwidth]{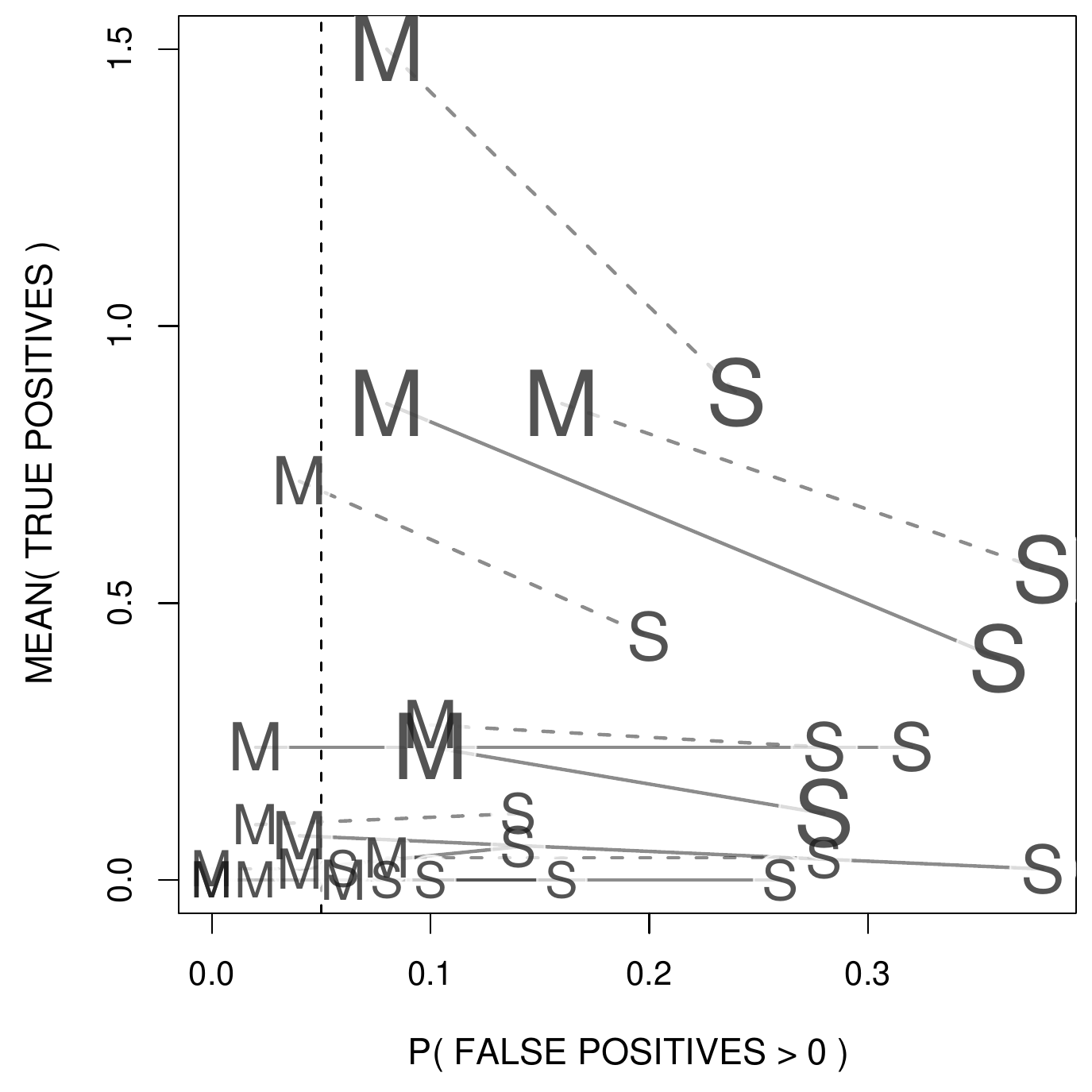}
\caption{\textit{ \label{fig:DSM} The results of simulation setup (C).}}
\end{figure}

\subsection{Comparisons with adaptive Lasso}
\label{sec:adaptivelasso}

Next, we compare the multi-split selector with the adaptive Lasso
\citep{zou06adalasso}. We have used the adaptive Lasso previously as a
variable selection method in our proposed multi-split method. The adaptive
Lasso is usually employed on its own. There are a few choices to make when
using the adaptive Lasso. We use the same choices as
previously. The initial estimator is obtained as the Lasso solution with a
10-fold CV-choice of the penalty parameter. The adaptive Lasso penalty is
also obtained by 10-fold CV.

Despite desirable asymptotic consistency properties \citep{huang2008als},
the adaptive Lasso does not offer error control in the same way as
Theorem~\ref{theo:fam-error-single} does for the multi-split method.  In
fact, the FWER (the probability of selecting at least one noise variable)
is very close to~1 with the adaptive Lasso in all the simulations we have
seen. In contrast, our multi-split method offers asymptotic control, which
was seen to be very well matched by the empirical FWER in the vicinity of
$\alpha=0.05$. Table \ref{COMPTABLEn100p200} shows the simulation results
for the multi-split method using $\tilde{S}_{adap}$ and the adaptive Lasso
on its own side by side for a simulation setting with $n = 100$, $p = 200$
and the same settings as in (A) and (B) otherwise. The adaptive Lasso
selects roughly 20 noise variables (out of $p=200$ variables), even though
the number of truly relevant variables is just 5 or 10. The average number
of false positives is at most 0.04 and often simply 0 with the proposed
multi-split method.

\begin{table}[h!]
\begin{center}
\begin{small}
\begin{tabular}{ccc|cc|cc|cc}
\multicolumn{3}{c|}{\bf }&\multicolumn{2}{c|}{\bf E( True Positives )} 
&\multicolumn{2}{c|}{\bf E( False Positives )}&\multicolumn{2}{c}{\bf  
P( False Positives $>$ 0 )}\\ \hline
Uniform&&&Multi&Adaptive&Multi&Adaptive&Multi&Adaptive\\
Sampling&$|S|$&SNR&Split&Lasso&Split&Lasso&Split&Lasso\\  \hline
NO&10&0.25&0.00&2.30&0&9.78&0&0.76\\
NO&10&1&0.58&6.32&0&20.00&0&1\\
NO&10&4&4.14&8.30&0&25.58&0&1\\
NO&10&16&7.20&9.42&0.02&30.10&0.02&1\\
YES&10&0.25&0.02&2.52&0&10.30&0&0.72\\
YES&10&1&0.10&7.46&0.02&21.70&0.02&1\\
YES&10&4&2.14&9.96&0&28.46&0&1\\
YES&10&16&9.92&10.00&0.04&30.66&0.04&1\\
NO&5&0.25&0.06&1.94&0&11.58&0&0.84\\
NO&5&1&1.50&3.86&0.02&19.86&0.02&1\\
NO&5&4&3.52&4.58&0.02&23.56&0.02&1\\
NO&5&16&4.40&4.98&0&27.26&0&1\\
YES&5&0.25&0.02&2.22&0&12.16&0&0.8\\
YES&5&1&0.82&4.64&0.02&22.18&0.02&1\\
YES&5&4&4.90&5.00&0&24.48&0&1\\
YES&5&16&5.00&5.00&0&28.06&0&1
\end{tabular}
\end{small}
\vspace{3mm}
\caption{\textit{\label{COMPTABLEn100p200} Comparing the multi-split method with
  CV-Lasso selection, $\tilde{S}_{adap}$, with the selection made when using
  the adaptive Lasso and a CV-choice of the involved penalty parameters for a 
  setting with $n = 100$ and $p = 200$.}}
\end{center}
\end{table}

There is clearly a price to pay for controlling the family-wise error
rate. Our proposed multi-split method detects on average less truly
relevant variables than the adaptive Lasso. For very low SNR, the
difference is most pronounced. The multi-split method selects in general
neither correct nor wrong variables for $\mbox{SNR}=0.25$, while the
adaptive Lasso averages between 2 to 3 correct selections, among 9-12 wrong
selections. Depending on the objectives of the study, one would prefer
either of the outcomes. For larger SNR, the multi-split method detects
almost as many truly important variables as the adaptive Lasso, while still
reducing the number of falsely selected variables from 20 or above to
roughly 0.

The multi-split method seems hence beneficial in settings where the cost of
making an erroneous selection is rather high. For example, expensive
follow-up experiments are usually required to validate results in
bio-medical applications and a stricter error control will place more of
the available resources into experiments which are likely to be successful.

\subsection{Motif regression}
\label{sec:realdata-motifregression}
We apply the multi-split method to a real data set about motif regression
\citep{conlon03motifregression}. For a total of $n = 287$ DNA segments we
have the binding intensity of a protein to each of the segments. These will
be our response values $Y_1, \ldots, Y_n$. Moreover, for $p = 195$
candidate words (`motifs') we have scores $x_{ij}$ which measure how well
the $j$th motif is represented in the $i$th DNA sequence. The motifs are
typically 5--15bp long candidates for the true binding site of the
protein. The hope is that the true binding site is in the list of
significant variables showing the strongest relationship between the motif
score and the binding intensity. Using a linear model with
$\tilde{S}_{adap}$, the multi-split method identifies one predictor
variable at the 5\% significance level. The single-split method is not able
to identify a single significant predictor. In view of the asymptotic error
control and the empirical results in Section~\ref{section:numerical} there
is substantial evidence that the selected variable corresponds to a true
binding site. For this specific application it seems desirable to pursue a
conservative approach with low FWER. As mentioned above, we could control
other, less conservative error measures as discussed in Section~\ref{section:extensions}.

\subsection{Comparison with standard low-dimensional FDR control}

\begin{figure}
\begin{center}
\includegraphics[width=0.99\textwidth]{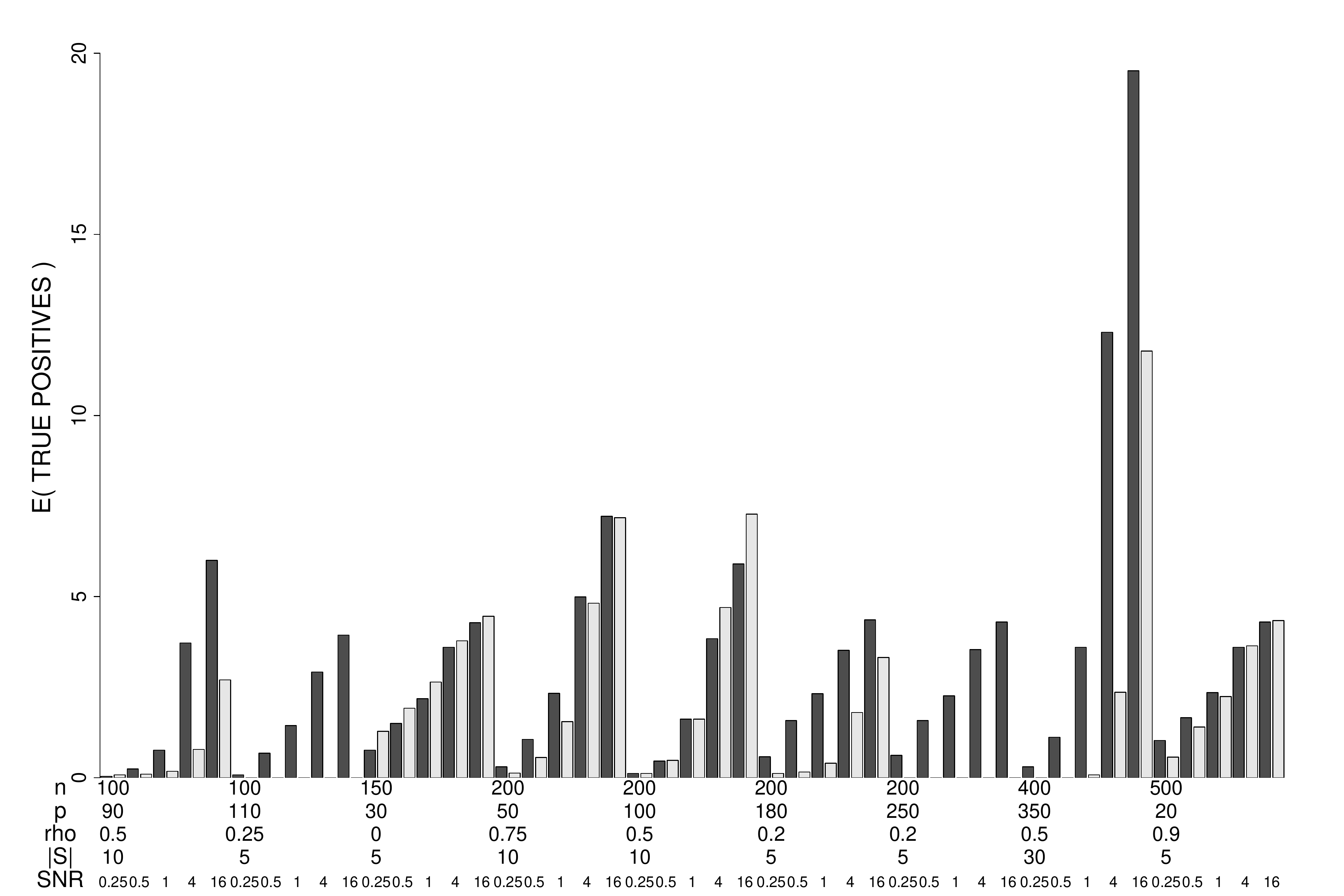}
\caption{\textit{ \label{fig:FDR} The results of FDR controlling simulations for the multi-split method (dark bar) and standard FDR control (light bar). The settings of $n,p,\rho,|S|$ and SNR are given below each simulation. The height of the bars corresponds to the average number of selected important variables. For $p>n$, the standard method breaks down and the corresponding bars are set to height 0.  }}
\end{center}
\end{figure}

We mentioned that control of FDR can be an attractive alternative to FWER if we expect 
a sizable number of rejections. Using the corrected p-values $P_1,\ldots,P_p$, a simple FDR-controlling procedure was derived in Section~\ref{section:FDR} and its asymptotic control of FDR was shown in Theorem~\ref{theo:fdr-all}. We now look empirically at the behavior of the resulting method and its power to detect truly interesting variables. Turning again to the simulation setting (\ref{sim-lowdim}), we vary the sample size $n$, the number of variables $p$, the signal to noise ratio SNR, the correlation $\rho$ between neighboring variables and the number $s$ of truly interesting variables.  

It was shown already above extensively that the multi-split method is preferable to the single-split method. Here, we are more interested in comparison to well understood traditional FDR controlling procedures. For $p<n$, the standard approach would be to compute the least squares estimator once for the full dataset. For each variable a p-value is obtained and the FDR controlling procedure as in (\ref{kFDR}) can be applied. This approach obviously breaks down for $p>n$. Our proposed approach can be applied both to low-dimensional ($p<n$) and high-dimensional ($p\ge n$) settings. 

In all settings, the empirical FDR of our method (not shown) is below $q = 0.05$
and often close to zero. Results regarding power are shown in Figure~\ref{fig:FDR} for control at $q=0.05$. 

It is maybe unexpected, but the multi-split method  tracks the power of the standard FDR controlling procedure quite closely for low-dimensional data $p<n$. In fact, the multi-split method is doing considerably better if $n/p$ is below, say, 1.5 or the correlation among the tests is large. An intuitive explanation for this behavior is that, as $p$ approaches $n$, the variance in each estimated coefficient vector under the OLS estimate  is increasing substantially. This in turn increases the variance of all OLS components $\hat{\beta}_j$, $j=1,\ldots,p$ and reduces the ability to select the truly important variables. The multi-split method, in contrast, trims the total number of variables to a substantially smaller number on one half of the samples and suffers then less from an increased variance in the estimated coefficients on the second half of the samples. Repeating this over multiple splits leads thus to  a surprisingly powerful variable selection procedure even for low-dimensional data. Nevertheless, we think  that the main application will be high-dimensional data, where the standard approach breaks down completely.

\section{Extensions}\label{section:extensions}
Due to the generic nature of our proposed methodology, extensions to any
situation where (asymptotically valid) p-values $\tilde{P}_j$ for
hypotheses $H_{0,j} \; (j = 1, \ldots, p)$ are available are
straightforward. An important class of examples are generalized linear
models (GLMs) or Gaussian Graphical Models. The dimensionality reduction
step would typically involve some form of shrinkage estimation. An example
for Gaussian Graphical Models would be the recently proposed `Graphical
Lasso' \citep{friedman2008sic}. The second step would rely on classical
(e.g.\ likelihood ratio) tests applied to the selected submodel, analogous
to the methodology proposed for linear regression.

In some settings, control of FWER at, say, $\alpha=0.05$ is too conservative. One can either resort to control of FDR, as alluded to above. 
Alternatively, FWER control can easily be adjusted to control the expected number of false
rejections. Take as an example the adjusted p-value $P_j$, defined in
(\ref{eq:quant-pval-opt}).  Variable $j$ is rejected if and only if $P_j\le
\alpha$. (For the following, assume that adjusted p-values, as defined in
(\ref{eq:pvaladj}), are not capped at~1. This is a technical detail only as
it does not modify the proposed FWER controlling procedure.)  Rejecting
variable $j$ if and only if $P_j\le\alpha$ controls FWER at level
$\alpha$. Instead, one can reject variables if and only if $P_j/K \le
\alpha$, where $K>1$ is a correction factor. Call the number of falsely
rejected variables $V$,
\[ V= \sum_{j\in N} 1\{ P_j /K \le\alpha \}. \] Then the expected number of
false positives is controlled at level $ \limsup_{n\rightarrow\infty} \E[V]
\le \alpha K .$ A proof of this follows directly from the proof of
Theorem~\ref{theo:fam-error-all}. Of course, we can equivalently set
$k=\alpha K$ and obtain a control $\limsup_{n\rightarrow\infty} \E[V]\le
k$. For example, setting $k=1$ offers a much less conservative error
control, if so desired, than control of the family-wise error rate.

\section{Discussion}
We proposed a multi-sample-split method for assigning statistical
significance and constructing conservative p-values for hypothesis testing
for high-dimensional problems where the number of predictor variables may
be much larger than sample size. Our method is an extension of the
single-split approach of \cite{wasserman08highdim} and is extended to false discovery rate (FDR) control. Combining the results of
multiple data-splits, based on quantiles as summary statistics, improves
reproducibility compared to the single-split method. The multi-split method
shares with the single-split method the property of asymptotic error
control and model selection consistency. We argue empirically that the
multi-split method usually selects much fewer false positives than the
single-split method while the number of true positives is slightly
increased. The main area of application will be high-dimensional data, where the number $p$ of predictor variables exceeds sample size $n$, as standard approaches rely on least-squares estimation and thus fail in this setting. It was, however, shown that the method is also an interesting alternative to standard FDR and FWER control in lower-dimensional settings as the proposed FDR control can be more powerful if $p$ is reasonably large but smaller than sample size $n$. The method is very generic and can be used for a broad spectrum
of error controlling procedures in multiple testing, including linear and
generalized linear models.

\appendix 
\section{Proofs}

\begin{proof}[Proof of Theorem \ref{theo:fam-error-single}]
  For technical reasons we define
  \begin{equation}\label{eq:defK}
    K_j^{(b)} = P_j^{(b)} 1\{S \subseteq \tilde{S}^{(b)}\} + 1\{S \not
    \subseteq \tilde{S}^{(b)}\}.
  \end{equation}
  $K_j^{(b)}$ are the adjusted p-values if the estimated active set
  contains the true active set. Otherwise, all p-values are set to
  1. Because of assumption (A\ref{screening-property}) and for fixed $B$,
  $\mathbb{P}[K_j^{(b)} = P_j^{(b)} \; \textrm{for all} \; b = 1, \ldots,
  B]$ on a set $A_n$ with $\mathbb{P}[A_n] \to 1$. Therefore, we can define
  all the quantities involving $P_j^{(b)}$ also with $K_j^{(b)}$, and it is
  sufficient to show under this slightly altered procedure that
  \[
    \Prob[\min_{j \in N} Q_j(\gamma) \leq \alpha] \leq \alpha.
  \]
  In particular we can omit here the limes superior.

  We also omit for the proofs the function $\min\{1,\cdot\}$ from the definitions of $Q_j(\gamma)$ and $P_j$ in (\ref{eq:quant-pval}) and (\ref{eq:quant-pval-opt}) respectively. The selected sets of variables are clearly unaffected and notation is simplifies considerably. 

  Define for $u\in(0,1)$ the quantity $\pi_j(u)$ as the fraction of bootstrap samples that yield $Kj^{(b)}$ less than or equal to $u$,
  \[
    \pi_j(u ) = \frac{1}{B} \sum_{b = 1}^{B} 1\big\{K_j^{(b)} 
    \leq u \big\}.
  \]
  Note that the events $\{Q_j(\gamma) \leq \alpha\}$ and $\{\pi_j(\alpha \gamma)
  \geq \gamma\}$ are equivalent. Hence,
  \begin{equation}\label{Qsum}
    \Prob\Big[ \min_{j \in N} Q_j(\gamma) \leq \alpha \Big]  \leq  
    \sum_{j \in N} \E\Big[1\big\{Q_j(\gamma) \leq \alpha \big\}\Big] 
     =  \sum_{j \in N} \E\Big[1\big\{\pi_j(\alpha \gamma) \geq \gamma
      \big\}\Big].
  \end{equation}
  Using a Markov inequality,
  \[
    \sum_{j \in N} \E\Big[1\big\{\pi_j(\alpha\gamma) \geq \gamma
      \big\}\Big] \leq \frac{1}{\gamma} \sum_{j \in N} \E[\pi_j(\alpha\gamma)].
  \]
  By definition of $\pi_j(\cdot)$,
  \[
    \frac{1}{\gamma} \sum_{j \in N} \E[\pi_j(\alpha \gamma)] = 
    \frac{1}{\gamma} \frac{1}{B} \sum_{b=1}^B \sum_{j \in N \cap 
      \tilde{S}^{(b)}} \E\Big[1\big\{K_j^{(b)} \leq \alpha \gamma \big\}\Big].
  \]
  Moreover, using the definition of $K_j^{(b)}$ in (\ref{eq:defK}),
  \[
    \E\Big[1\big\{K_j^{(b)} \leq \alpha \gamma \big\}\Big] \leq 
    \Prob\Big[P_j^{(b)} \leq \alpha \gamma \, \big| \, S \subseteq
    \tilde{S}^{(b)} \Big] = \frac{\alpha \gamma}{|\tilde{S}^{(b)}|}.
  \]
  This is a consequence of the uniform distribution of $\tilde{P}_j^{(b)}$
  given $S \subseteq \tilde{S}^{(b)}$. Summarizing these results we get
  \[
    \Prob\Big[ \min_{j \in N} Q_j(\gamma) \leq \alpha \Big] \leq   \frac{1}{\gamma} \frac{1}{B} \sum_{b=1}^B \E\Big[  \sum_{j \in N \cap 
      \tilde{S}^{(b)}} \frac{\alpha \gamma}{|\tilde{S}^{(b)}|} \Big] \leq \alpha,
  \]which completes the proof.
\end{proof}

\begin{proof}[Proof of Theorem \ref{theo:fam-error-all}]
  As in the proof of Theorem \ref{theo:fam-error-single} we will work with
  $K_j^{(b)}$ instead of $P_j^{(b)}$. Analogously, instead of
  $\tilde{P}_j^{(b)}$ we work with $\tilde{K}_j^{(b)}$.

  For any $\tilde{K}_j^{(b)}$ with $j \in N$ and $\alpha\in(0,1)$,
  \begin{equation}\label{use2}
    \E\leftb[ \frac{ 1\leftsmall\{ \tilde{K}_j^{(b)}\le\alpha\gamma\rightb\}}
      {\gamma} \rightb] \le \alpha.
  \end{equation}
  Furthermore, 
\[ 
\E\leftb[ \max_{j \in N}  \frac{ 1\leftsmall\{ K_j^{(b)}
      \le \alpha \gamma \rightsmall\}}{\gamma} \rightb] \le  \E\leftb[ \sum_{j \in N}  \frac{ 1\leftsmall\{ K_j^{(b)}
      \le \alpha \gamma \rightsmall\}}{\gamma} \rightb] 
\le  \E\leftb[ \sum_{j \in N\cap \tilde{S}^{(b)}}  \frac{ 1\leftsmall\{ K_j^{(b)}
      \le \alpha \gamma \rightsmall\}}{\gamma} \rightb] 
\]
and hence, with (\ref{use2}) and using the definition (\ref{eq:defK}) of $K_j^{(b)}$,
  \begin{equation}\label{use1}
    \E\leftb[ \max_{j \in N}  \frac{ 1\leftsmall\{ K_j^{(b)}
      \le \alpha \gamma \rightsmall\}}{\gamma} \rightb]  \le \E\leftb[ \sum_{j \in N\cap
      \tilde{S}^{(b)}} \frac{\alpha} {|\tilde{S}^{(b)}|}\rightb]  \le \alpha. 
  \end{equation}
  For a random variable $U$ taking values in $[0,1]$,
  \[
    \sup_{\gamma\in(\gamma_{\min},1)} \frac{ 1\leftsmall\{
    U\le\alpha\gamma\rightsmall\}} {\gamma} =
    \left\{ \begin{array}{cl} 0 &U \ge \alpha, \\
      \alpha/U & \alpha\gamma_{\min} \le U
      <\alpha, \\ 1/\gamma_{\min} & U <
      \alpha\gamma_{\min}. \end{array} \right.
  \]
  Moreover, if $U$ has a uniform distribution on $[0,1]$,
  \[ 
    \E\leftb[ \sup_{\gamma\in(\gamma_{\min},1)} \frac{ 1\leftsmall\{
      U\le\alpha\gamma\rightsmall\}} {\gamma} \rightb] =
    \int_{0}^{\alpha\gamma_{\min}} \gamma_{\min}^{-1} dx +
    \int_{\alpha\gamma_{\min}}^\alpha \alpha x^{-1} dx = \alpha (1 - \log
    \gamma_{\min}).
  \] 
  Hence, by using that $\tilde{K}_j^{(b)}$ has a uniform distribution on
  $[0,1]$ for all $j \in N$, conditional on $S \subseteq \tilde{S}^{(b)}$,
  \[ 
    \E\leftb[ \sup_{\gamma \in (\gamma_{\min},1)} \frac{1\leftsmall\{
      \tilde{K}_j^{(b)} \le \alpha \gamma \rightsmall\}}{\gamma} \rightb] 
   \leq \E\leftb[ \sup_{\gamma \in (\gamma_{\min},1)} \frac{1\leftsmall\{
      \tilde{K}_j^{(b)} \le \alpha \gamma \rightsmall\}}{\gamma} \, \big| \, S
    \subseteq \tilde{S}^{(b)} \rightb]
    = \alpha (1 - \log \gamma_{\min}).
  \]
  Analogously to (\ref{use1}), we can then deduce that
  \[
    \sum_{j \in N}\E\leftb[  \sup_{\gamma\in(\gamma_{\min},1)} \frac{ 1\leftsmall\{
      K_j^{(b)} \le\alpha\gamma\rightsmall\}} {\gamma} \rightb] \leq
    \alpha (1 - \log \gamma_{\min}).
  \]
  Averaging over all bootstrap samples yields
  \[ 
    \sum_{j \in N}\E\leftb[  \sup_{\gamma\in(\gamma_{\min},1)} \frac{
    \frac{1}{B} \sum_{b=1}^B 1\leftsmall\{ K_j^{(b)} /\gamma
    \le\alpha \rightsmall\} } {\gamma} \rightb] \le \alpha (1- \log \gamma_{\min}).
  \] 
  Using again a Markov inequality,
  \[  
    \sum_{j \in N}\E\leftb[ \sup_{\gamma\in(\gamma_{\min},1)}
    1\{\pi_j(\alpha \gamma) \geq \gamma\} \rightb]
    \le \alpha 
    (1-\log \gamma_{\min}),
  \]
  where we have used the same definition for $\pi_j(\cdot)$ as in the proof of
  Theorem \ref{theo:fam-error-single}.
  
  Since the events $\{Q_j(\gamma) \leq \alpha\}$ and $\{\pi_j(\alpha\gamma)
  \geq \gamma\}$ are equivalent, it follows that
  \[ 
    \sum_{j \in N}\Prob\leftb[  \inf_{\gamma\in (\gamma_{\min},1) }
    Q_j(\gamma) \le \alpha\rightb] \le \alpha (1-\log \gamma_{\min}),
  \] 
  implying that
  \[ 
    \sum_{j \in N}\Prob\leftb[  \inf_{\gamma\in (\gamma_{\min},1) }
    Q_j(\gamma)(1-\log \gamma_{\min}) \le \alpha\rightb] \le \alpha.
  \]
  Using the definition of $P_j$ in (\ref{eq:quant-pval-opt}), 
  \begin{equation}\label{eq:PBOUND}
     \sum_{j \in N}\Prob\leftb[ P_j \le \alpha\rightb] \le \alpha,
  \end{equation}
and thus, by the union bound,
\[     \Prob\leftb[ \min_{j \in N} P_j \le \alpha\rightb] \le \alpha,\]  which completes the proof.
\end{proof}

\begin{proof}[Proof of Theorem \ref{theo:fdr-all}]
We use  identical notation to the proof of Theorem 1.3 in \citet{benjamini01control}. An exception is that we use the value $q$ instead of $q/m$ in the FDR-controlling procedure since we are working with adjusted p-values. Let
\[ p_{ijk} = \Prob( \{ P_i \in [ (j-1) q , j q  ] \} \mbox{ and } C_k^{(i)}),   \]
where $C_k^{(i)}$ is the event that \emph{if} variable $i$ were rejected, \emph{then} $k-1$ other variables were also rejected. 
Now, as shown in equation (10) and then again in (28) in \citet{benjamini01control},
\[ \E(Q) = \sum_{i\in N} \sum_{k=1}^p \frac{1}{k} \sum_{j=1}^k p_{ijk}.\]
Using this result, we use in the beginning a similar argument to \citet{benjamini01control},
\begin{eqnarray} \E(Q) &=& \sum_{i\in N} \sum_{k=1}^p \frac{1}{k} \sum_{j=1}^k p_{ijk}  = \sum_{i\in N} \sum_{j=1}^p  \sum_{k=j}^p \frac{1}{k} p_{ijk} \nonumber \\ 
&\le & \sum_{i\in N} \sum_{j=1}^p  \sum_{k=j}^p \frac{1}{j} p_{ijk}  \le \sum_{i\in N} \sum_{j=1}^p \frac{1}{j} \sum_{k=1}^p  p_{ijk}  =  \sum_{j=1}^p \frac{1}{j}  \sum_{i\in N} \sum_{k=1}^p  p_{ijk} \label{ttt}
\end{eqnarray}
Let us denote 
\[ f(j) := \sum_{i\in N} \sum_{k=1}^p  p_{ijk}, \quad j=1,\ldots,p\] 
The last equation (\ref{ttt}) can then be rewritten as 
\begin{eqnarray} \E(Q) & \le & \sum_{j=1}^p  \frac{1}{j}f(j) =   f(1)  + \sum_{j=2}^p \frac{1}{j} \leftb( \sum_{j'=1}^j f(j') - \sum_{j'=1}^{j-1} f(j')  \rightb)  \\ & = &\sum_{j=1}^{p-1}  (\frac{1}{j} - \frac{1}{j+1})  \sum_{j'=1 }^j f(j')   +  \frac{1}{p} \sum_{j'=1 }^p f(j') \label{EQ} \end{eqnarray}
Note that, in analogy to (27) in \citet{benjamini01control},
\[ \sum_{k=1}^p  p_{ijk} = P\Big( \{ P_i \in [ (j-1) q , j q  ] \} \cap \big( \bigcup_{k}^p C_k^{(i)}\big ) \Big)  = P\Big(P_i \in [ (j-1) q , j q  ]\Big) \]
and hence 
\[ f(j) = \sum_{i\in N} \sum_{k=1}^p  p_{ijk}   =  \sum_{i\in N} P\Big(P_i \in [ (j-1) q , j q  ]\Big), \]
from which it follows by (\ref{eq:PBOUND}) in the proof of Theorem~\ref{theo:fam-error-all} that
\[ \sum_{j'=1}^j f(j') = \sum_{i\in N} P\Big(P_i \le jq \Big) \le jq. \]
Using this in (\ref{EQ}),
we obtain 
\begin{equation}\label{EQ} \E(Q)\; \le \; \sum_{j=1}^{p-1}  (\frac{1}{j} - \frac{1}{j+1})  jq   +  \frac{1}{p} pq = \Big( \sum_{j=1}^{p-1} \frac{1}{j(j+1)} j +1   \Big) q =  q \sum_{j=1}^p \frac{1}{j}, \end{equation}
which completes the proof.
\end{proof}

\begin{proof}[Proof of Corollary \ref{coro:model-consistency}]
  Because the single-split method is model selection consistent, it must
  hold that $\Prob[\max_{j\in S} \tilde{P}_j |\tilde{S}| \leq \alpha_n] \to
  1$ for $n \to \infty$. Using multiple data-splits, this property holds
  for each of the $B$ splits and hence $\Prob[ \max_{j\in S} \max_{b}
  \tilde{P}^{(b)}_j |\tilde{S}^{(b)}| \leq \alpha_n] \to 1$, which implies
  that, with probability converging to 1 for $n\to\infty$, the quantile
  $\max_{j\in S} Q_j(1)$ is bounded from above by $\alpha_n$. The maximum
  over all $j\in S$ of the adjusted p-values $P_j= (1-\log \gamma_{\min})
  \inf_{\gamma\in (\gamma_{\min},1)} Q_j(\gamma)$ is thus bounded from
  above by $(1-\log \gamma_{\min}) \alpha_n$, again with probability
  converging to 1 for $n\to\infty$.  
\end{proof}

\end{document}